\newtheorem{prop}{Proposition}
\newtheorem{thm}{Theorem}
\newtheorem{lem}{Lemma}
\newtheorem{definition}{Definition}
\def\F{\mathcal{F}}
\def\A{\mathcal{A}}
\def\R{\mathbb{R}}
\def \sd{\Sigma\Delta}
\def\Cx{\frac{1}{\sqrt{\ell}}P_\ell V^* R_\Omega C_x}
\def\fn{\|\frac{1}{\sqrt{\ell}}P_\ell V^* R_\Omega C_x \|_F}
\def\2n{\|\frac{1}{\sqrt{\ell}}P_\ell V^* R_\Omega C_x\xi \|_2}
\def\E{\mathbb{E}}
\def\P{\mathbb{P}}
\def\D{D_{N,s}}
\def\l{\frac{1}{\sqrt{\ell}}}
\begin{document}

\title{Quantized Compressed Sensing\\ for Partial Random Circulant Matrices}
\author{Joe-Mei Feng, Felix Krahmer, Rayan Saab}

\maketitle

\abstract{
We provide the first analysis of a non-trivial quantization scheme for compressed sensing measurements arising from structured measurements. Specifically, our analysis studies compressed sensing matrices consisting of rows selected at random, without replacement, from a circulant matrix generated by a random subgaussian vector. We quantize the measurements using stable, possibly one-bit, Sigma-Delta schemes, and use a reconstruction method based on convex optimization. We show that the part of the reconstruction error due to quantization decays polynomially in the number of measurements. This is in-line with analogous results on Sigma-Delta quantization associated with random Gaussian or subgaussian matrices, and significantly better than results associated with the widely assumed memoryless scalar quantization. Moreover, we prove that our approach is stable and robust; i.e., the reconstruction error degrades gracefully in the presence of non-quantization noise and when the underlying signal is not strictly sparse.  
 The analysis relies on results concerning subgaussian chaos processes as well as a variation of McDiarmid's inequality. 
}

\section{Introduction}

Compressed sensing \cite{CRT05,CRT06,Donoho2006_CS} deals with accurately reconstructing sparse (or approximately sparse) vectors $x\in \R^N$ from relatively few generalized linear measurements of the form $(\langle  a_i , x \rangle)_{i=1}^m$, where $m<N$ and where the vectors $a_i\in \R^N$ are chosen appropriately.
Accurate reconstruction is  theoretically possible because well chosen compressed sensing measurement maps are injective on the ``low-complexity" set of sparse vectors. On the other hand, tractable reconstruction algorithms in the compressed sensing context rely heavily on sophisticated, non-linear techniques including convex optimization and greedy numerical methods (e.g., \cite{needell2009cosamp,dai2009subspace,blumensath2009iterative}). Consider the $m \times n $ matrix $A$ whose rows are given by the vectors $a_i$, and denote the possibly noisy compressed sensing measurements by \begin{equation}y=Ax+e,\label{eq:CS_model}\end{equation} where $e\in \R^m$ represents noise. If $\|e\|_2 \leq \epsilon$, and $A$ is chosen appropriately, then standard compressed sensing results guarantee (e.g., \cite{CRT05, CRT06, Donoho2006_CS}, see also \cite{foucart2013mathematical}) that the solution $\hat{x}$ to the optimization problem
\begin{equation}\label{eq:l1}
\min_z \|z\|_1 \text{\quad subject to \quad} \|Az-y\|_2 \leq \epsilon
\end{equation}
satisfies 
\begin{equation}\|x-\hat{x}\|_2 \leq C(\|e\|_2 + \frac{\|x-x_s\|_1}{\sqrt{s}} ).\label{eq:l1_guarantee}\end{equation}
Above, $x_s$ denotes the best $s$-sparse approximation to $x$ (i.e., the vector with at most $s$ non-zero entries that best approximates $x$).

 The need for sophisticated non-linear decoders such as \eqref{eq:l1}, which can only be reliably implemented on digital computers, implies that compressed sensing is inextricably linked to a digitization (quantization) step. Through quantization, the measurements are converted from continuous valued quantities to elements from a finite set (e.g., $\{\pm 1\}$), so that they can be stored and manipulated (and ultimately used for reconstruction) via digital computers.

Despite the importance of quantization, and a flurry of recent activity focusing on this subject in the compressed sensing context, its treatment remains rather underdeveloped in at least two ways. First, most of the current literature (e.g., \cite{sun2009optimal,zymnis2010compressed,jacques2011dequantizing,laska2011democracy,boufounos20081,plan2013one})  has focused on the most intuitive approach to quantization, namely memoryless scalar quantization (MSQ). However, MSQ is known to have strong theoretical limitations to its reconstruction error guarantees, which we discuss in Section \ref{sec:quant}. Second, all works on the topic to date have only considered compressed sensing matrices $A$ with subgaussian random entries, both for MSQ and for more sophisticated quantization schemes such as $\Sigma\Delta$ quantization, which have been shown to outperform MSQ (see Section~\ref{sec:quant} below for more details). 
\subsection{Contributions}In this paper, we address the lack of a non-trivial quantization theory for a practically important class of measurement matrices: partial random circulant matrices. Our main result, Theorem \ref{main} shows that if the compressed sensing measurement matrix is a randomly subsampled partial random circulant matrix, and the measurements are quantized by a stable (even $1$-bit) Sigma-Delta quantizer, then with an appropriate tractable decoder (which we specify):
\begin{itemize}
\item The reconstruction error due to quantization decays polynomially with the number of measurements.
\item The recovery is robust to noise  and stable with respect to deviations from the sparsity assumption.
\end{itemize}   
Our analysis relies on proving a restricted isometry property for the product of our compressed sensing measurement matrix and the matrix formed by the left singular vectors of an $r$th order difference operator, which we provide in Proposition \ref{inf}. 
 For this, we use a combination of a version of McDiarmid's inequality \cite{mcdiarmid1998concentration}, Dudley's inequality \cite{du67}, and recent results on suprema of chaos processes \cite{krahmer2014suprema}. As a notable technical difference to previous works (without quantization) studying measurement systems involving random subsampling, our proof explicitly exploits that we are subsampling without replacement.  Let us now introduce the necessary background information, starting with partial random circulant matrices, followed by a brief introduction to quantization and to the concentration of measure techniques we employ.
\section{Background and notation}
\subsection{Notation and basic definitions}
We denote by $[N]$ the set $\{1, \dots, N\}$ and by $e_{k}$ the $k$-th standard basis vector. A vector $x\in \R^N$ is $s$-sparse if only $s$ of its entries are non-vanishing, that is, its support $T=\operatorname{supp}(x) = \{j\in [N]: x_j \neq 0 \}$ satisfies $|T|=s$. Throughout, the matrix $F=\big(e^{2\pi ijk/N}\big)_{j,k=1}^N$ is the unnormalized $N\times N$ discrete Fourier transform matrix, and $\bar{F}$ denotes the complex conjugate of $F$. That is, $F\bar{F} = \bar{F}F=N Id$. We say that a matrix $A$ satisfies the restricted isometry property of order $s$ and constant $\delta$, if for all $s$-sparse vectors $x$ 
$$(1-\delta)\|x\|_2^2\leq  \|Ax\|_2^2 \leq (1+\delta)\|x\|_2^2.$$

Given a vector $x\in \R^N$, we denote by $\hat{X}\in \R^{N\times N}$ the diagonal matrix with $\hat{x}:=Fx$ on the diagonal. For a matrix $A$, $A_k$ denotes its $k$-th column.

We write $f \lesssim  g$ for two functions $f$ and $g$ if they are defined on the same domain $D$ and there exists an absolute constant $C$ such that $f(y)\leq C g(y)$ for all $y\in D$, $f \gtrsim  g$ is defined analogously.
Given a full-rank matrix $A \in \R^{m\times d}$ with $m>d$, its pseudo-inverse is given by $A^{\dagger}=(A^*A)^{-1}A^*$.

\subsection{Partial random circulant matrices}
Given a vector $\xi=(\xi_1,\xi_2,\ldots,\xi_N)\in \R^N$, the corresponding circulant matrix $\Phi=\Phi(\xi)\in \mathbb{R}^{N\times N}$
is defined by  

\begin{center}
\begin{equation}C_\xi=\left[
\begin{array}{ccccc}
\xi_1&\xi_{2}&\xi_{3}&\cdots&\xi_N\\
\xi_N&\xi_1&\xi_{2}&\cdots&\xi_{N-1}\\
\vdots&&&&\vdots\\
\xi_2&\xi_{3}&\xi_{4}&\cdots&\xi_1
\end{array}\right].\label{eq:Phi}\end{equation}
\end{center}
In this paper we consider random circulant matrices $C_\xi$ arising from random vectors $\xi$ whose entries are independent $L$-subgaussian random variables with variance 1 and mean 0, in the sense of the following definition. 
 \begin{definition}[see, e.g., \cite{vershynin2010introduction}]\label{def:SG}
 	A random variable $X$ is called $L$-subgaussian if 
 	\begin{align}
 	\P(|X|>t)  \leq \exp(1-t^2/L^2). \label{subgaussiantail}
 	\end{align}
 	Up to absolute multiplicative constants, the subgaussian parameter $L$ is equivalent to the subgaussian 
 	norm $\|X\|_{\Psi_2}$ defined as $\|X\|_{\Psi_2}=\sup_{p\geq 1}p^{-1/2}(\E|X|^p)^{1/p}$. Specifically,   \eqref{subgaussiantail} implies that \cite{vershynin2010introduction}
 	\begin{equation}\label{eq:willow}
 	\|X\|_{\psi_2}\leq \sqrt{\tfrac{e}{2}}L.
 	\end{equation} 
 \end{definition}

  A partial random circulant matrix is obtained from a random circulant matrix  by sampling the rows of the latter. In this paper, we consider only sampling without replacement, thus obtaining the following definition. 

\begin{definition}\label{def:rand_circ}
Let $\Phi=C_\xi \in \R^{N\times N}$ be a random circulant matrix as in \eqref{eq:Phi} and, for $m \leq N$, let $\Omega=(\Omega_1, \dots, \Omega_m)$ be a random vector obtained by sampling from $[N]$ without replacement. That is, $\Omega$ is drawn uniformly at random from the set 
\begin{equation}
\Xi := \{\omega\in [N]^m: \omega_i\neq\omega_j\mbox{ for }i\neq j\}.
\end{equation}

Then the associated \emph{partial random circulant matrix} is given by  \[
A = R_\Omega \Phi.
\]
where $R_\Omega$ is the subsampling operator
\[
 \R^{m\times N} \ni R_\Omega = \sum_{j=1}^m e_{j} e_{\Omega_j}^*.
 \]

\end{definition}

Partial random circulant matrices are important to the practical application of compressed sensing. This is due to the simple observation that a circular convolution of a signal $x\in \R^N$ with a ``filter" $\widetilde{\xi}\in \R^N$, as given by the vector
$y=x \circledast \tilde \xi \in \R^N$ with entries \[y_j:= \sum_{i=1}^{N} x_i \widetilde{\xi}_{j-i \bmod n},\] 
can be represented by the action of a circulant matrix. Indeed one has $x\circledast \tilde \xi = C_\xi x$, where 

 $\xi \in \R^N$ is defined via $ \xi_{N-j+1}=\widetilde{\xi}_j$ for $j\in \{1,...,N\}$ and $C_\xi$ is as in \eqref{eq:Phi}.
Consequently, as the convolution is commutative, one has $C_\xi x=C_x \xi$; we will repeatedly make use of this observation.

 Due to the ubiquity of convolutions in signal processing applications, partial random circulant matrices, modeling subsampled random convolutions, have played an important role in the development of compressed sensing applications such as radar imaging, Fourier optical imaging, and wireless channel estimation (see, e.g., \cite{romberg2009compressive,haupt2010toeplitz}). Recovery guarantees for partial circulant matrices have been an active area of research in the last decade, the best known results have recently been proved by Mendelson, Rauhut, and Ward \cite{MRW17}.

\subsection{Quantization}\label{sec:quant}
In the compressed sensing context, quantization is the map that replaces the vector $y = Ax + e \in \R^m$ by a representation that uses a finite number of bits. Most often, practical quantization maps are of the form 
\begin{align}
\mathcal{Q}: \R^m &\to \mathcal{A}^m \nonumber
\end{align}
where $\mathcal{A}\subset \R$ is a finite set, called the quantization alphabet.
Both memoryless scalar quantization and $\Sigma\Delta$ quantization, which we will discuss in the next paragraphs, execute quantization maps of this form.  

The most natural and common choices of alphabets have equispaced elements. As representatives for such alphabets we will focus on the so-called mid-rise alphabet with $2L$ levels and step-size $\delta$, denoted by $\A_L^\delta$ and given by 
$\mathcal{A}_L^\delta:=\big\{\pm(2\ell+1)\delta/2, \ \ell \in\{0,...,L-1\} \big\}.\label{eq:mid-rise}$
The minimal instance of such an alphabet is the $1$-bit quantization alphabet, which we denote by $\mathcal{A}=\{-1,+1\}$.

  The fact that $\mathcal{Q}$ outputs a vector of alphabet elements allows the quantization to be implemented progressively. That is, one can relate each entry of the quantized vector to some measurement and each subsequent measurement can then be quantized in a way that depends on previous measurements. This idea is exploited in $\sd$ schemes.

\subsection*{Memoryless scalar quantization}
Memoryless scalar quantization is an intuitive approach to digitizing compressed sensing measurement. It simply uses a scalar quantizer 
\begin{align}\label{eq:scalar}
Q_\A: \R &\to \mathcal{A} \nonumber \\
      z &\mapsto \arg\min_{v\in \mathcal{A}} | z-v |  
\end{align}
to quantize every entry of $y$ independently.
Using a standard compressed sensing recovery algorithm such as \eqref{eq:l1}, one can use the robustness of standard compressed sensing reconstruction algorithms \eqref{eq:l1_guarantee}  to bound the reconstruction error. Such results guarantee that the reconstruction error decays as the size of the alphabet increases. However, they  {do not guarantee error decay as one takes more measurements}. One could argue that a better reconstruction algorithm or a sharper analysis would alleviate this issue, but that is hardly the case. Indeed, consider working with a fixed quantization alphabet, as one would do in practice due to fixing the quantization hardware. Then, as shown by Goyal, Vetterli, and Thao \cite{GVT98}, the error in reconstructing a $k$-sparse signal from its $m$ MSQ-quantized measurements cannot decay faster than $k/m$, {even when using an optimal decoder}. This means that by linearly increasing the number of measurements, and hence increase the number of bits used, denoted by $\mathcal{R}$ (for rate), one can, at best, only linearly decrease the reconstruction error, denoted $\mathcal{D_{MSQ}}$ (for distortion). That is, the rate-distortion relationship associated with MSQ satisfies  \begin{equation}\mathcal{D}_{MSQ}(\mathcal{R}) \geq C \mathcal{R}^{-1}.\label{eq:RD_MSQ}\end{equation} This lower bound stands in sharp contrast to the rate-distortion relationship that an optimal assignment of bits (for encoding $k$-sparse vectors in the unit-ball of $\R^N$) yields, namely (see, e.g., \cite{BB_DCC07}) $$\mathcal{D}^*(\mathcal{R})\leq C\frac{N}{k}e^{-c \mathcal{R}/k}.$$ {In this sense MSQ is far from optimal.}
One factor preventing MSQ from being optimal in general, is that it does not exploit any correlations among the measurements, as it treats each measurement independently of the others.

\subsection*{Sigma-Delta quantization}

Sigma-Delta ($\Sigma\Delta$) quantization
 is an alternative quantization method that, in its simplest form, works by scalar quantizing the sum of the current measurement and a state variable, and then updating the state variable. 
It is through the state variable that the dependencies between the measurements are accounted for in the quantization. $\sd$ schemes were proposed in the 1960's \cite{inose1963unity} for quantizing bandlimited functions and have seen widespread use in practice, particularly in audio applications  \cite{NST96}. For almost 40 years, there was no precise understanding of $\sd$ from a mathematical perspective, before recently, following the seminal work of Daubechies and Devore in \cite{daub-dev}, a number of works analyzed $\sd$ schemes for bandlimited functions from a mathematical perspective \cite{G03, DGK10, KW12, DS15}. 

 In addition, $\sd$ schemes have recently been shown to be well suited for quantizing finite-frame expansions \cite{benedetto2006sigma, blum:sdf,BPA2007,KSW12} as well as compressed sensing measurements \cite{gunturk2013sobolev, KSY13,SWY16, SWY16_2}. We review these results in the following subsection, and we now focus on the relevant details of $\Sigma\Delta$ quantization schemes.  

In the simplest $\sd$ scheme, a first order $\sd$ quantizer, the state variable $u_i$ accounts for the accumulated quantization error.
That is, the quantizer applies to the measurements $y_i$
the iteration
\begin{align}\label{eq:SDrec}
q_i&=Q_\mathcal{A}(y_i+u_{i-1}) \\
u_i&=u_{i-1}+y_i-q_i \label{eq:SDrec2}.
\end{align}
Here $Q_\mathcal{A}$ is the scalar quantizer \eqref{eq:scalar}. In an  {$r^{\rm th}$-order $\sd$ scheme, the first order finite difference $\Delta$ (given by $(\Delta u)_i := u_i - u_{i-1}$, and appearing in \eqref{eq:SDrec2}) is replaced by an $r^{\rm th}$-order finite difference $\Delta^r$. Moreover,  before applying the scalar quantizer, some quantization rule $\rho:  \mathbb{R}^{r+1} \rightarrow \mathbb{R}$ is applied.

That is, the quantized measurement vector $q$ with entries $q_i \in \A$ is computed via the recursion 
\begin{equation}
q_i = Q_\A\left(\rho(y_i,u_{i-1},u_{i-2},\dots,u_{i-r})\right),
\label{equ:rthOrdq}
\end{equation}
\begin{equation}
u_i =  y_i - q_i - \sum^{r}_{j=1} {r \choose j} (-1)^j u_{i-j}
\label{equ:rthOrdu}. 
\end{equation}
Using the first-order difference matrix $D$ with entries given by \begin{equation}
D_{i,j} := \left\{ \begin{array}{ll} 1 & \textrm{if}~i=j\\ -1 & \textrm{if}~i = j+1\\ 0 & \textrm{otherwise}\end{array} \right.,
\label{Def:diffMat}
\end{equation}
the relationship between ${ x}$, ${ u}$, and ${ q}$ can be concisely written in matrix-vector notation as 
\begin{equation}
D^r { u} ~=~ y - { q}.
\label{equ:LinrSigDelt}
\end{equation} 
The inverse $D^{-r}$ will play a crucial role in our analysis, which is why we fix the notation $$D^{-r}=USV^*,$$ for its singular value decomposition throughout this paper.

Recalling that $(D^{-1}z)_j = \sum_{i=1}^j z_i$, in the case of  first order schemes (where $r=1$) the state variable $u$ can be interpreted as an accumulated error, as can be seen by appying $D^{-1}$ to the equation above. It intuitively follows that it is crucial for the sequence of state variable $u$ to be bounded in this case. This intuition can be made precise and generalizes to higher order schemes. For this reason we seek  \textit{stable $r^{\rm th}$-order schemes}, i.e., schemes for which \eqref{equ:rthOrdq} and \eqref{equ:rthOrdu} result in $$\|  u \|_{\infty} \leq {C}_{\rho,Q}(r)$$ for all $N \in \mathbb{N}$, and ${ y} \in \mathbb{R}^N$ with $\|{ y}\|_\infty \leq 1$. Importantly, we require that $C_{\rho,Q}:  \mathbb{N} \mapsto \mathbb{R}^+$ be entirely independent of both $N$ and ${ y}$.  One can show that stable $r^{\rm th}$-order $\Sigma \Delta$ schemes exist with $C_{\rho,Q}(r) = O((Cr)^r)$ for some constant $C$ \cite{G03, DGK10}, even when $\mathcal{A}$ is a $1$-bit alphabet, but that there are fundamental lower bounds on $C$ and no better dependence on $r$ can be achieved \cite{CD02, KW12}. 

\subsection{Probabilistic Tools}\label{sec:prob_tools}
We will use a number of different probabilistic tools for different parts of our argument. We state them here for convenience.
The first one is a  variation of McDiarmid's inequality. Note that it closely relates to the Azuma-Hoeffding inequality and the method of bounded differences.
\begin{thm} [\cite{mcdiarmid1998concentration}, Theorem 3.14]\label{DMcDiarmid} Let $(\Omega, \mathcal{F}, \P)$ be a probability space and $(\emptyset, \Omega) = \mathcal{F}_0 \subseteq  \mathcal{F}_1 \subseteq ... \subseteq  \mathcal{F}_m  $ a filtration in $\mathcal{F}$. Consider a bounded random variable $X$, and set $X_k := \E(X|\mathcal{F}_k)$.
	Define the sum of squared conditional ranges $$R^2 = \sum_{k=1}^{m} ran_k^2$$  where $$ran_k := \sup(X_k|\F_{k-1}) + \sup(-X_k| \F_{k-1}),$$
	and denote its (essential) supremum by $$\hat{r}^2:=\sup R^2.$$
	Then, \[ \P(X-\E(X)\geq t) \leq e^{-2t^2/\hat{r}^2}.\]
\end{thm}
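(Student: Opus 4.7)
The approach I would take is the classical Doob-martingale Chernoff argument behind Azuma-Hoeffding, with care taken for the fact that here the increment bounds $\mathrm{ran}_k$ are themselves random variables (measurable with respect to $\mathcal{F}_{k-1}$) rather than deterministic constants. First I would form the Doob martingale $X_k = \E(X|\mathcal{F}_k)$ (already given in the statement) and its difference sequence $d_k := X_k - X_{k-1}$ for $k = 1,\dots,m$. Because $\mathcal{F}_0 = (\emptyset,\Omega)$ is trivial, we have $X_0 = \E X$, and assuming $X$ is $\mathcal{F}_m$-measurable (the standard setting) one may telescope $X - \E X = \sum_{k=1}^m d_k$. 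By construction $\E(d_k \mid \mathcal{F}_{k-1}) = 0$, so $(d_k)$ is a martingale difference sequence.

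The key observation is that, conditional on $\mathcal{F}_{k-1}$, the quantity $X_{k-1}$ is fixed, so the essential conditional range of $d_k$ given $\mathcal{F}_{k-1}$ equals the essential conditional range of $X_k$ given $\mathcal{F}_{k-1}$, which is exactly
\begin{equation*}
\mathrm{ran}_k = \sup(X_k \mid \mathcal{F}_{k-1}) + \sup(-X_k \mid \mathcal{F}_{k-1}).
\end{equation*}
Next I would invoke Hoeffding's lemma conditionally: since $\E(d_k \mid \mathcal{F}_{k-1}) = 0$ and $d_k$ lies in an $\mathcal{F}_{k-1}$-measurable interval of length $\mathrm{ran}_k$,
\begin{equation*}
\E\bigl(e^{\lambda d_k} \,\big|\, \mathcal{F}_{k-1}\bigr) \leq \exp\!\left(\tfrac{\lambda^2 \mathrm{ran}_k^2}{8}\right) \qquad \text{a.s., for every } \lambda > 0.
\end{equation*}
This is where the proof deviates from the textbook statement: the right-hand side is random, but it is $\mathcal{F}_{k-1}$-measurable, so it can be pulled out when iterating the tower property.

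I would then iterate, peeling off one factor at a time via the tower property, to get
\begin{equation*}
\E \exp\!\left(\lambda (X - \E X)\right) \leq \E \exp\!\left(\tfrac{\lambda^2}{8}\sum_{k=1}^m \mathrm{ran}_k^2\right) = \E \exp\!\left(\tfrac{\lambda^2 R^2}{8}\right) \leq \exp\!\left(\tfrac{\lambda^2 \hat r^2}{8}\right),
\end{equation*}
where the last bound uses $R^2 \leq \hat r^2$ almost surely by definition of the essential supremum. Combining this with the standard Chernoff inequality $\P(X - \E X \geq t) \leq e^{-\lambda t}\, \E e^{\lambda(X - \E X)}$ and optimizing in $\lambda$ (the minimizer is $\lambda = 4t/\hat r^2$) yields the claimed bound $e^{-2t^2/\hat r^2}$.

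The main obstacle, and the only real subtlety beyond standard Azuma-Hoeffding, is handling the randomness of $\mathrm{ran}_k$: one must be careful that the Hoeffding bound on the conditional moment generating function is applied in a way that preserves $\mathcal{F}_{k-1}$-measurability of the right-hand side, so that iterating the tower property produces a product (and ultimately a sum inside the exponential) of random quantities that is then dominated by the deterministic bound $\hat r^2$. Everything else is routine bookkeeping.
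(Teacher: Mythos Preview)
The paper does not supply its own proof of this theorem; it is quoted as a tool from \cite{mcdiarmid1998concentration} (Theorem~3.14) and used as a black box in the proof of Lemma~\ref{distance}. There is therefore nothing to compare against.

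That said, your sketch is the standard Doob-martingale/Chernoff derivation and is correct. One small caveat: the telescoping step $X-\E X=\sum_{k=1}^m d_k$ requires $X$ to be $\mathcal{F}_m$-measurable, which you note explicitly. This is satisfied in the paper's application (where $X$ is a function of $\Omega_1,\dots,\Omega_m$ and $\mathcal{F}_k=\sigma(\Omega_1,\dots,\Omega_k)$), and in the general case the inequality as stated would apply to $X_m$ rather than $X$ itself.
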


A second tool that we will be using is Dudley's inequality. In order to formulate the result, we recall the definitions of the covering number and of subgaussian random variables.
\begin{definition}
	Let $(S,d)$ be a metric space and $\epsilon>0$. A subset $\mathcal{N}_\epsilon$ of $S$ is called an $\epsilon$-net if every point in $S$ can be approximated to within $\epsilon$ by some point in $\mathcal{N}_\epsilon$, i.e., for all $x\in S$  there exists $y\in \mathcal{N}_\epsilon$ such that $d(x,y)<\epsilon$.  
	The {\em covering number} $\mathcal{N}(S,d,\epsilon)$ is the minimal cardinality of an $\epsilon$-net of $S$.
\end{definition}

\begin{thm}[Dudley's inequality \cite{du67}]\label{chaining}
	Let $Z_x$ be a random variable depending on $x\in T$, for some set $T$ and
	define $d(x,y)=\|Z_x-Z_y\|_{\Psi_2}$, if
	$$
	\P(|Z_x-Z_y|>t)\lesssim \exp\Big(-t^2/\|Z_x-Z_y\|^2_{\Psi_2}\Big),
	$$
	then for any $x_0\in T$
	$$
	\P(\sup    | Z_x-Z_{x_0}    |>t)\lesssim \exp\Big(-t^2/\big(\int_0^{\sup_{x\in \D} \|Z_x\|_{\Psi_2} } \sqrt{\log \mathcal{N}  (\D,d(x,y), \epsilon)}d\epsilon\big)^2\Big).
	$$
\end{thm}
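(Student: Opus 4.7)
The plan is to establish this tail bound by a classical chaining argument. First, set $\sigma := \sup_{x\in T} \|Z_x\|_{\Psi_2}$ and, for each integer $k\geq 0$, let $T_k\subseteq T$ be a minimal $\epsilon_k$-net of $(T,d)$ with $\epsilon_k := 2\sigma\cdot 2^{-k}$, taking $T_0 := \{x_0\}$, which is a $2\sigma$-net of $T$ because $d(x_0,y) \leq \|Z_{x_0}\|_{\Psi_2}+\|Z_y\|_{\Psi_2}\leq 2\sigma$ for every $y\in T$. For each $x\in T$, let $\pi_k(x)$ be a closest point of $T_k$ to $x$ under $d$. Assuming standard separability of $(T,d)$, we have $d(\pi_k(x),x)\leq \epsilon_k\to 0$ and hence the telescoping identity
$$Z_x - Z_{x_0} \;=\; \sum_{k\geq 1}\bigl(Z_{\pi_k(x)} - Z_{\pi_{k-1}(x)}\bigr),$$
which gives $\sup_{x\in T}|Z_x - Z_{x_0}| \leq \sum_{k\geq 1}\sup_{x\in T}|Z_{\pi_k(x)} - Z_{\pi_{k-1}(x)}|$.

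I would then control each chaining link via the subgaussian tail hypothesis combined with a union bound over the at most $|T_k|\cdot|T_{k-1}|\leq |T_k|^2$ distinct pairs $(\pi_{k-1}(x),\pi_k(x))$ as $x$ ranges over $T$. Since $d(\pi_k(x),\pi_{k-1}(x))\leq \epsilon_k+\epsilon_{k-1}\leq 3\epsilon_{k-1}$, this yields
$$\P\!\left(\sup_{x\in T}|Z_{\pi_k(x)} - Z_{\pi_{k-1}(x)}|>t_k\right) \;\lesssim\; |T_k|^2\exp\!\bigl(-c\,t_k^2/\epsilon_{k-1}^2\bigr).$$
Choosing $t_k := C\epsilon_{k-1}\bigl(\sqrt{\log \mathcal{N}(T,d,\epsilon_k)}+\sqrt{s+k}\bigr)$ drives the $k$-th probability to at most $e^{-(s+k)}$, so the union over $k$ has total failure probability $\leq 2e^{-s}$.

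On the complementary event, $\sup_{x\in T}|Z_x - Z_{x_0}| \leq \sum_{k\geq 1} t_k$. A standard Riemann-sum comparison using monotonicity of $\epsilon\mapsto \mathcal{N}(T,d,\epsilon)$ shows that $\sum_{k\geq 1}\epsilon_{k-1}\sqrt{\log \mathcal{N}(T,d,\epsilon_k)} \lesssim I$, where $I := \int_0^\sigma \sqrt{\log \mathcal{N}(T,d,\epsilon)}\,d\epsilon$ is the Dudley integral, while the residual $\sum_{k\geq 1}\epsilon_{k-1}\sqrt{s+k}\lesssim (1+\sqrt{s})\sigma$ is absorbed after observing that either $I\gtrsim \sigma$ in the non-degenerate regime, or the supremum vanishes almost surely if $T$ collapses to a single point in $d$. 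Combining, $\sup_{x\in T}|Z_x - Z_{x_0}|\leq C'(1+\sqrt{s})\,I$ with probability at least $1-2e^{-s}$; setting $t:=C'(1+\sqrt{s})I$ and solving for $s$ gives $\P\!\left(\sup_{x\in T}|Z_x-Z_{x_0}|>t\right) \lesssim \exp\!\bigl(-c\,t^2/I^2\bigr)$, which matches the claimed bound up to absolute constants.

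The main obstacle is the simultaneous choice of the chaining increments $t_k$ to ensure that the union-bound failure probabilities geometrically sum to a single subgaussian tail whose variance proxy matches $I$; careful tracking of the absolute constants, together with the implicit use of $\sigma\lesssim I$ in the non-trivial regime, also require attention. A minor interpretive point: the statement uses $\D$ in the integral and covering number, which I read as a stand-in for the abstract index set $T$ over which the process $(Z_x)_{x\in T}$ is defined.
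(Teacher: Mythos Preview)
The paper does not prove this theorem; it is quoted as a classical background result from Dudley~\cite{du67} in Section~\ref{sec:prob_tools} and used as a black box in the proof of Lemma~\ref{part2}. So there is no ``paper's own proof'' to compare against.

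Your chaining argument is the standard proof of the tail-bound form of Dudley's inequality and is correct in outline. One small technical slip: the absorption step at the end requires $\sigma\lesssim I$, but $\sigma=\sup_x\|Z_x\|_{\Psi_2}$ need not be controlled by the entropy integral even when $(T,d)$ has more than one point---$\sigma$ measures the size of the variables themselves, not of the increments. The clean fix is to start the chaining at $\epsilon_0$ equal to the \emph{diameter} of $(T,d)$ (which is at most $2\sigma$, so $T_0=\{x_0\}$ still works), giving a residual of order $\operatorname{diam}(T,d)\sqrt{s}$; and $\operatorname{diam}(T,d)\lesssim I$ does follow whenever $T$ is non-trivial in $d$, since $\mathcal{N}(T,d,\epsilon)\geq 2$ for $\epsilon<\tfrac{1}{2}\operatorname{diam}(T,d)$.

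Your reading of $\D$ as a stand-in for the abstract index set $T$ is correct: in the paper $\D$ is a macro that expands to $D_{N,s}$, the set of $s$-sparse unit vectors used later in Lemma~\ref{part2}, and its appearance in the statement of Theorem~\ref{chaining} is a notational artifact.
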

A third result that we will be using concerns subgaussian chaos processes.
Its original version involves the Talagrand $\gamma_2$ functional, an intricate complexity parameter related to the generic chaining \cite{talagrand2014upper}, which can be bounded in terms of covering numbers via Dudley's inequality (Theorem~\ref{chaining}). To avoid discussing the generic chaining methodology in detail, we state a combined version in terms of only these upper bounds.

\begin{thm}[\cite{krahmer2014suprema}]\label{chaos}
Let $\mathcal{C}$ be a set of matrices and consider the complexity parameters
\begin{align*}
d_F(\mathcal{C})&=\sup_{C\in\mathcal{C}}\|C\|_F,\qquad
d_{2\rightarrow 2}(\mathcal{C})  =\sup_{C\in\mathcal{C}}   \|C\|_{2\rightarrow 2},\qquad
D(\mathcal{C})=\int_0^{d_{2\rightarrow 2}(\mathcal{C})}\sqrt{\log\mathcal{N}(\mathcal{C},\|\cdot\|_{2\rightarrow 2},u)}\ du.
\end{align*}
Let $\xi$ be a random vector whose entries $\xi_j$ are independent, mean-zero, variance $1$, $L$-subgaussian random variables. Then, for $t>0$, the random variable 
\[C_\mathcal{C}(\xi) =\sup_{C\in \mathcal{C}}  |     \| C\xi\|_2^2-\E_{\xi}\|C\xi\|_2^2 \]
satisfies
$$
\P(C_\mathcal{C}(\xi)  \geq c_1 E+t  )\leq 2 \exp(-c_2\min\{\frac{t^2}{V^2},\frac{t}{U}\}),
$$
where
\begin{align*}
E&=D(\mathcal{C})(D(\mathcal{C})+d_F(\mathcal{C}))+d_F(\mathcal{C})d_{2\rightarrow 2}(\mathcal{C}),\qquad
V=d_{2\rightarrow 2}(\mathcal{C})(D(\mathcal{C})+d_F(\mathcal{C})),\qquad
U=d_{2\rightarrow 2}^2(\mathcal{C}),
\end{align*}
and the constants $c_1,\ c_2$ depend only on $L$.
\end{thm}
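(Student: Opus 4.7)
The plan is to prove the bound by recognizing that $Z_C := \|C\xi\|_2^2 - \mathbb{E}\|C\xi\|_2^2 = \xi^{*}(C^{*}C)\xi - \mathrm{tr}(C^{*}C)$ is a centered chaos of order two in the independent $L$-subgaussian entries of $\xi$, and then controlling the supremum over $\mathcal{C}$ via generic chaining anchored by a pointwise Bernstein-type estimate. For each fixed $C$, the Hanson--Wright inequality yields
\[
\mathbb{P}\bigl(|Z_C| \ge t\bigr) \le 2\exp\!\Bigl(-c\min\bigl\{\tfrac{t^2}{\|C^{*}C\|_F^2},\, \tfrac{t}{\|C^{*}C\|_{2\to 2}}\bigr\}\Bigr),
\]
and using $\|C^{*}C\|_F \le \|C\|_{2\to 2}\|C\|_F$ together with $\|C^{*}C\|_{2\to 2} = \|C\|_{2\to 2}^2$ already exposes the shape of $V$ and $U$ at the ``largest'' element of $\mathcal{C}$.

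Next I would analyze the increments $Z_{C_1}-Z_{C_2}$. A standard decoupling step plus Hanson--Wright applied to $C_1^{*}C_1 - C_2^{*}C_2$ gives a mixed-tail bound with subgaussian parameter $\|C_1^{*}C_1-C_2^{*}C_2\|_F$ and subexponential parameter $\|C_1^{*}C_1-C_2^{*}C_2\|_{2\to 2}$. On $\mathcal{C}$, both are controlled in terms of the operator-norm distance, since
\[
\|C_1^{*}C_1-C_2^{*}C_2\|_{2\to 2} \le 2\, d_{2\to 2}(\mathcal{C})\, \|C_1-C_2\|_{2\to 2},
\]
and an analogous Frobenius estimate with $d_F(\mathcal{C})$ in the prefactor. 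Hence $(Z_C)_{C\in\mathcal{C}}$ has mixed subgaussian/subexponential increments on the metric space $(\mathcal{C},\|\cdot\|_{2\to 2})$, with variance proxies modulated by $d_F(\mathcal{C})$ and $d_{2\to 2}(\mathcal{C})$.

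I would then feed this into Talagrand's generic chaining for processes with mixed tails. The resulting expected supremum scales like $\gamma_2(\mathcal{C},\|\cdot\|_{2\to 2})\bigl(\gamma_2(\mathcal{C},\|\cdot\|_{2\to 2}) + d_F(\mathcal{C})\bigr) + d_F(\mathcal{C})\, d_{2\to 2}(\mathcal{C})$, with a concentration tail in Bernstein form around it. Bounding $\gamma_2(\mathcal{C},\|\cdot\|_{2\to 2}) \lesssim D(\mathcal{C})$ via Dudley's entropy integral (Theorem~\ref{chaining}) turns this into the claimed expression for $E$, while the variance proxies at the root of the chain reproduce $V$ and $U$ in the exponent.

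The main obstacle is the mixed-tail chaining step: a single-scale chaining would merge the two regimes and yield only a subgaussian tail of the wrong scale, losing the sharper $\min\{t^2/V^2, t/U\}$ structure dictated by Hanson--Wright. Getting the correct two-regime bound requires Talagrand's generic chaining with both the $\gamma_1$ and $\gamma_2$ functionals, followed by the key observation that $\gamma_1(\mathcal{C},\|\cdot\|_{2\to 2})$ can be absorbed into $d_{2\to 2}(\mathcal{C})\cdot \gamma_2(\mathcal{C},\|\cdot\|_{2\to 2})$, so that only Dudley-controllable quantities survive in the final bound. This bookkeeping, rather than any single inequality, is the delicate part of the argument.
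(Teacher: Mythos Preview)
The paper does not prove Theorem~\ref{chaos}; it is quoted as a tool from \cite{krahmer2014suprema}, with the Talagrand $\gamma_2$-functional already replaced by its Dudley upper bound $D(\mathcal{C})$. So there is no in-paper proof to compare against, only the original argument of Krahmer--Mendelson--Rauhut.

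Your sketch assembles the right ingredients (Hanson--Wright, chaining, Dudley) but contains a genuine gap at the step you yourself flag as delicate. Applying Hanson--Wright to the increments $Z_{C_1}-Z_{C_2}$ and then running mixed-tail generic chaining produces, from the subexponential regime, a term of order $d_{2\to 2}(\mathcal{C})\cdot \gamma_1(\mathcal{C},\|\cdot\|_{2\to 2})$. Your proposed fix, absorbing $\gamma_1$ into $d_{2\to 2}\cdot\gamma_2$, is false in general: for $N$ well-separated unit-operator-norm matrices one has $\gamma_2\asymp\sqrt{\log N}$ while $\gamma_1\asymp\log N$, so the ratio diverges. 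Since the statement to be proved builds $E$ from $D(\mathcal{C})\gtrsim\gamma_2$ alone (note the quadratic term $D(\mathcal{C})^2$, not a $\gamma_1$-type quantity), this route does not reach the target as written.

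The argument in \cite{krahmer2014suprema} avoids $\gamma_1$ by a different mechanism: after decoupling to $\sup_C|\langle C\xi,C\xi'\rangle|$ with $\xi'$ an independent copy, one \emph{conditions} on $\xi'$. For fixed $\xi'$ the map $C\mapsto\langle C\xi,C\xi'\rangle$ is linear in $\xi$, hence purely subgaussian, and a single $\gamma_2$-chaining suffices. The random radius $\sup_C\|C\xi'\|_2$ that appears as a prefactor is itself a subgaussian supremum in $\xi'$ and contributes a second factor of $\gamma_2$, which is where $D(\mathcal{C})^2$ comes from. It is this two-layer conditioning, rather than a one-shot mixed-tail chain on $Z_C$, that makes only $\gamma_2$-controllable quantities survive.
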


\section{Related Work}
\subsection{$\Sigma\Delta$ quantization of finite-frame expansions}
The first paper analyzing $\Sigma\Delta$ quantization of finitely many measurements of finite-dimensional vectors was  
\cite{benedetto2006sigma}, initiating a series of papers on the subject. For example, the papers  \cite{benedetto2006sigma,  BPA2007, blum:sdf, KSW12} all studied $\Sigma\Delta$ quantization when one collects $m>N$ linear measurements $y_i= \langle a_i, x \rangle$ of $x\in \R^N$, where the collection $(a_i)_{i=1}^m$ spans $\R^N$ (and is called a finite-frame). In this \emph{finite-frame} setting, \cite{benedetto2006sigma} showed that the reconstruction error associated with first order $\Sigma\Delta$ quantization can be made to decay linearly with the number of measurements, hence the bit-rate. With this first order $\Sigma\Delta$ approach, the upper bound on the error  already matched the lower bound \eqref{eq:RD_MSQ} associated with MSQ. Using higher order $\Sigma\Delta$ schemes, subsequent papers (e.g., \cite{BPA2007, blum:sdf}) showed that the error can be made polynomial in the number of measurements, significantly outperforming the MSQ lower bound. Importantly, the linear reconstruction scheme proposed in \cite{BPA2007} to approximate $x$ from its quantized finite-frame measurements also proved fruitful in the compressed sensing context. Denoting by $A$ the $m\times N$ matrix ($m\geq N$) having $a_i$ as its rows, the \emph{$r^{\rm th}$-order Sobolev dual} of $A$ is the $N\times m$ matrix $$B:= (D^{-r}A)^\dagger D^{-r},$$
which is easily seen to be a left-inverse of $A$. The approach of \cite{BPA2007} was to estimate $x$ from $q$ via $\hat{x}=Bq$, yielding error rates that decayed like $m^{-r}$ (i.e., polynomially in the number of measurements and bits) provided the rows of $A$ obeyed some smoothness conditions. 

\subsection{$\Sigma\Delta$ quantization of compressed sensing measurements}
The paper \cite{gunturk2013sobolev}, soon followed by \cite{KSY13, feng2014rip}, was first to study $\Sigma\Delta$ quantization of compressed sensing measurements. They focused on the setup where the compressed sensing matrix is subgaussian, the underlying signal is \emph{strictly sparse}, and no noise contaminates the measurements. They analyzed a two-stage approach to signal recovery whereby one uses a standard decoder like \eqref{eq:l1} to estimate the support of the $k$-sparse signal, then applies the Sobolev dual of the associated $m\times k$ sub-matrix of $A$ to $q$. With this approach, the reconstruction error was again shown to decay polynomially in the number of measurements. The proofs in \cite{gunturk2013sobolev, KSY13} relied on bounding the smallest singular value of a certain anisotropic random matrix, while \cite{feng2014rip} significantly simplified the analysis by using an approach based on the restricted isometry property. These results showed that frame-theoretic quantization techniques could be extended to the compressed sensing setup. On the other hand, the reliance of \cite{gunturk2013sobolev, KSY13} on a two-step approach involving support recovery meant that obtaining a result for compressed sensing measurements of \emph{arbitrary} signals in the presence of noise would be difficult. 

More recently, in \cite{SWY15} a decoder based on convex optimization was proposed (to replace the two-step approach) and analyzed, with the main result being that it could handle both arbitrary signals and measurement noise (bounded by $\epsilon$). Specifically, if $q$ results from  quantizing compressed sensing measurements $y$ (as in \eqref{eq:CS_model}) using an $r^{\rm th}$-order $\Sigma\Delta$ scheme, one approximates $x$ with $\hat{x}$ via
\begin{align}\label{eq:opt}
(\hat{x},\hat{e}) :=  \arg\min\limits_{(z,\nu)}\|z\|_1 \  \text{ subject to } &  \|D^{-r}(A z+\nu-q )\|_2 \leq \gamma(r)\sqrt{m}  \notag \\
\text{\ \ and\ \ } & \|\nu\|_2\leq \epsilon \sqrt m,
\end{align} 
where $\gamma(r)$ depends on the quantization scheme used. The resulting approximation error due to quantization in \cite{SWY15} decays as $m^{-r+1/2}$, i.e., polynomially in $m$, and the approach is shown to be stable and robust. As in \cite{feng2014rip}, a main ingredient in the proofs of \cite{SWY15} is an analysis based on the restricted isometry properties of certain matrices arising from the interaction of the difference matrix with the compressed sensing matrix. Indeed, the following result, which we will also use, is proved in \cite{SWY15}. 
\begin{thm}\label{thm:SWY}\cite{SWY15}
Let $ A$ be an $m\times N$ matrix, and let $k,l \in \{1,...,m\}$. Suppose that $\frac{1}{\sqrt{\ell}}{P_\ell V^* A}$ satisfies the restricted isometry property of order $2k$ and constant $\delta<1/9$. Denote by $Q_{\Sigma\Delta}^r$ a stable $r$th order $\Sigma\Delta$ quantizer. Then, for all $x\in \mathbb{R}^N$ with $\| A x \|_\infty \leq \mu <1$ and all $e \in \mathbb{R}^m$ with $\|e\|_\infty \leq \epsilon < 1-\mu$ the estimate $\hat{x}$ obtained by solving \eqref{eq:opt} with $q=Q_{\Sigma\Delta}^r(A x + e)$
satisfies 
\begin{equation}\label{eq:guarantee}
\|\hat{x}-x\|_2 \leq C_1 \left(\frac{m}{\ell}\right)^{-r+1/2}\delta+C_2 \frac{\sigma_k(x)}{\sqrt k}+C_3\sqrt{\frac{m}{\ell}}\epsilon,
 \end{equation}
 where the constants $C_1,C_2,C_3$ depend on the quantizer, but not the dimensions of the problem.
\end{thm}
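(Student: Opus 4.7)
The plan is to adapt the standard Sobolev-dual $\Sigma\Delta$ convex-decoder analysis, with the RIP hypothesis carried on the preconditioned matrix $\tilde A := \tfrac{1}{\sqrt\ell}P_\ell V^* A$ rather than on $A$ itself. I would proceed in four steps: (i) feasibility of the ground truth $(x,e)$ in \eqref{eq:opt}; (ii) subtraction of the two feasibility constraints; (iii) a spectral reduction via the SVD $D^{-r}=USV^*$; and (iv) the classical RIP-based robust $\ell_1$-recovery implication applied to $\tilde A$.

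For step (i), I would set $\gamma(r) := C_{\rho,Q}(r)$, the $\ell_\infty$-stability constant of the quantizer. The hypothesis $\mu+\epsilon<1$ forces $\|Ax+e\|_\infty<1$, so stability produces a state vector $u$ with $\|u\|_\infty\le \gamma(r)$ satisfying $D^r u = Ax+e-q$. Hence $\|D^{-r}(Ax+e-q)\|_2=\|u\|_2\le\sqrt m\,\gamma(r)$ and $\|e\|_2\le\sqrt m\,\epsilon$, which together show $(x,e)$ is feasible; optimality of $(\hat x,\hat e)$ then gives $\|\hat x\|_1\le\|x\|_1$. For step (ii), writing $h:=\hat x-x$ and $\eta:=\hat e-e$, the triangle inequality applied to the two feasibility constraints yields $\|D^{-r}(Ah+\eta)\|_2\le 2\gamma(r)\sqrt m$ and $\|\eta\|_2\le 2\epsilon\sqrt m$.

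For step (iii), I would order the singular values of $D^{-r}$ decreasingly and let $P_\ell$ project onto the leading $\ell$ coordinates. Unitarity of $U$ together with the elementary inequality $\|P_\ell Sw\|_2\ge \sigma_\ell(D^{-r})\,\|P_\ell w\|_2$ gives
\[
\sigma_\ell(D^{-r})\,\|P_\ell V^*(Ah+\eta)\|_2 \le \|D^{-r}(Ah+\eta)\|_2 \le 2\gamma(r)\sqrt m.
\]
Invoking the known lower bound $\sigma_\ell(D^{-r})\gtrsim (m/\ell)^r$ for the $r$th-order difference matrix, dividing by $\sqrt\ell$, and peeling off the noise via $\|\tfrac{1}{\sqrt\ell}P_\ell V^*\eta\|_2\le\tfrac{1}{\sqrt\ell}\|\eta\|_2\le 2\epsilon\sqrt{m/\ell}$ then produces
\[
\|\tilde A h\|_2 \lesssim \gamma(r)\,(m/\ell)^{-r+1/2}+\epsilon\sqrt{m/\ell}.
\]
Step (iv) is then immediate: the RIP hypothesis on $\tilde A$ with $\delta<1/9$ combined with $\|\hat x\|_1\le\|x\|_1$ triggers the classical robust $\ell_1$-recovery implication, giving $\|h\|_2\le C\,\|\tilde A h\|_2+C'\,\sigma_k(x)/\sqrt k$ for suitable $\delta$-dependent constants; substituting the step-(iii) bound closes the argument with the advertised exponents.

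The step I expect to be the main obstacle is the spectral reduction: securing the singular-value estimate $\sigma_\ell(D^{-r})\gtrsim (m/\ell)^r$ requires exploiting the Pascal-type lower-triangular structure of $D^{-r}$, and one must carefully track that the tube constraint in \eqref{eq:opt} controls the proxy $\hat e$ rather than $e$ itself, which is what forces the (slightly lossy) $\sqrt{m/\ell}\,\epsilon$ noise term in the final bound instead of a sharper $\epsilon$. Everything else—feasibility, the triangle inequality, and the RIP recovery closeout—is routine.
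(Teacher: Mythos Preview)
The paper does not actually prove this theorem; it is quoted verbatim from \cite{SWY15} and used as a black box (``the following result, which we will also use, is proved in \cite{SWY15}''). So there is no ``paper's own proof'' to compare against. Your four-step outline---feasibility of $(x,e)$ via quantizer stability, triangle-inequality subtraction, spectral reduction through the SVD $D^{-r}=USV^*$ together with the singular-value lower bound $\sigma_\ell(D^{-r})\gtrsim(m/\ell)^r$, and the RIP robust-recovery implication applied to $\tilde A$---is precisely the argument of \cite{SWY15}, and it is correct as sketched. Your identification of the singular-value estimate as the only nontrivial ingredient is accurate; that bound is established in \cite{gunturk2013sobolev, KSY13} and can be taken as known.
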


The combination of stability, robustness, quantization error decay, and practicability make the $\Sigma\Delta$ quantization approach, followed by recovery via \eqref{eq:opt}  amenable to practical applications where one has the freedom to select subgaussian compressed sensing matrices. Nevertheless, the only matrices $\Phi$ for which \cite{SWY15} proved that the assumptions of Theorem \ref{thm:SWY} hold are subgaussian. As such, the results of \cite{SWY15} do not apply to important practical setups such as system identification, radar, and coded-aperture imaging, where \emph{structured} random matrices such as partial random circulant ones arise naturally in the compressed sensing context (see, e.g., \cite{haupt2010toeplitz, rauhut2012restricted}). The only result we are aware of (aside from those of this manuscript) that addresses quantization in the context of  structured random measurement matrices is that of \cite{Wang15}. \cite{Wang15} shows that first order $\Sigma\Delta$ quantization coupled with an appropriate decoder yields an error decaying as $\big(\frac{m}{k^4\log N }\big)^{-1/2}$, when the measurement matrix is a randomly selected $m \times N$  submatrix of the $N\times N$ discrete Fourier transform matrix. Consequently the results are only meaningful when $m$ scales like $k^4$, which is considerably worse than the linear scaling of $m$ with $k$ (up to log factors) arising in Theroem \ref{thm:SWY} and commonly in compressed sensing without quantization. 
One of our main contributions  (Theorem \ref{main}) is to show that such a linear scaling (up to log factors) 
also holds for certain structured random measurements, specifically for random circulant matrices.

\section{Main results}\label{sec:results}

In this section, we prove the following theorem, which is the main result of this paper.
\begin{thm}\label{main} Denote by $Q_{\Sigma\Delta}^r$ a stable $r$th order $\Sigma\Delta$ quantizer.  Let $A$ be an $m\times N$ partial random circulant  matrix associated to a vector with independent $L$-subgaussian entries with mean 0 and variance 1. 
 Suppose that $N\geq m \geq (C \eta)^\frac{1}{1-2\alpha} s  \log^{\frac{2}{1-2\alpha}} N\log^{\frac{2}{1-2\alpha}}s $, for some $\eta > 1$ and $\alpha\in[0,1/2)$. With probability exceeding $1-e^{-\eta}$, the following holds:
\newline 
For all $x\in \mathbb{R}^N$ with $\| A x \|_\infty \leq \mu <1$ and all $e \in \mathbb{R}^m$ with $\|e\|_\infty \leq \epsilon < 1-\mu$ the estimate $\hat{x}$ obtained by solving \eqref{eq:opt} satisfies
\[\|\hat{x}-x\|_2 \leq C_1 \left(\frac{m}{\ell}\right)^{-r+1/2}\delta+C_2 \frac{\sigma_k(x)}{\sqrt k}+C_3\sqrt{\frac{m}{\ell}}\epsilon.\]
Here $C,C_1,C_2, C_3$ are constants that only depend on $r$ and $L$.

\end{thm}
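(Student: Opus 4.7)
The overall strategy is to reduce Theorem \ref{main} to Theorem \ref{thm:SWY} by establishing a restricted isometry property for $\frac{1}{\sqrt{\ell}} P_\ell V^* A$, which is exactly the content of the referenced Proposition \ref{inf}. Thus the bulk of the work is to show that, with the claimed probability, for every $2k$-sparse $x$ with $\|x\|_2 = 1$,
\[
\left| \tfrac{1}{\ell}\|P_\ell V^* R_\Omega C_\xi x\|_2^2 - \|x\|_2^2 \right| \leq \delta.
\]
Using the convolution commutation $C_\xi x = C_x \xi$ and writing $M_x := \frac{1}{\sqrt{\ell}} P_\ell V^* R_\Omega C_x$, I would split
\[
\|M_x \xi\|_2^2 - \|x\|_2^2 \;=\; \underbrace{\big(\|M_x\xi\|_2^2 - \|M_x\|_F^2\big)}_{(I)} \;+\; \underbrace{\big(\|M_x\|_F^2 - \|x\|_2^2\big)}_{(II)},
\]
since $\E_\xi \|M_x\xi\|_2^2 = \|M_x\|_F^2$. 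Term $(I)$ is controlled using the $\xi$-randomness with $\Omega$ fixed, while term $(II)$ is controlled using the $\Omega$-randomness, and a uniform bound over sparse $x$ is obtained for each.

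For $(I)$, I would condition on $\Omega$ and apply the chaos process bound (Theorem \ref{chaos}) to the family $\mathcal{C} = \{ M_x : x \text{ is } 2k\text{-sparse}, \|x\|_2 \leq 1 \}$. The complexity parameters $d_F(\mathcal{C})$, $d_{2\to 2}(\mathcal{C})$, and $D(\mathcal{C})$ must be estimated. The key observation is that $C_x$ is itself a circulant matrix, so its singular values are bounded by $\|\hat{x}\|_\infty$ and its Frobenius norm by $\sqrt{N}\|x\|_2$; the projection $\frac{1}{\sqrt{\ell}} P_\ell V^*$ and the subsampling $R_\Omega$ then further shape these estimates. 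Covering numbers for $\mathcal{C}$ in the operator norm metric reduce, via $x \mapsto M_x$ being linear, to covering numbers of the sparse unit ball, which admit standard volumetric and entropy estimates. Combining via Dudley's inequality (Theorem \ref{chaining}) yields a uniform subexponential tail bound for $(I)$ with the desired $m \gtrsim s \, \mathrm{polylog}$ scaling.

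For $(II)$, observe that $\|M_x\|_F^2 = \frac{1}{\ell}\sum_{j=1}^m \|P_\ell V^* e_j e_{\Omega_j}^* C_x\|_2^2$ is a sum of $m$ functions of the sequentially revealed coordinates $\Omega_1, \ldots, \Omega_m$, and that its expectation over $\Omega$ equals $\|x\|_2^2$ up to controllable error (using $\E[R_\Omega^* R_\Omega] \approx \frac{m}{N} I$ together with the Parseval-type identities for $V P_\ell^* P_\ell V^*$). Because the $\Omega_j$ are drawn \emph{without} replacement, the coordinates are not independent, so I would apply the martingale version of McDiarmid (Theorem \ref{DMcDiarmid}) to the natural filtration $\mathcal{F}_k = \sigma(\Omega_1, \ldots, \Omega_k)$: the conditional ranges $\mathrm{ran}_k$ are bounded in terms of the maximum contribution of a single sampled row, which is of order $\|x\|_\infty^2$ times a polylogarithmic factor in $N$. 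This gives, for each fixed $x$, a sub-Gaussian tail with variance proxy of the right order. A covering-net argument over the set of sparse unit vectors, combined with a union bound, then upgrades the pointwise estimate to the required uniform bound.

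The main obstacle is Step $(II)$: obtaining tight conditional ranges for the martingale differences under sampling-without-replacement, uniformly over sparse $x$, is where the subtle interaction between the circulant structure of $C_x$ and the random subsampling $R_\Omega$ must be handled carefully, and it is here that the ``no replacement'' structure is genuinely exploited (independent sampling would only reduce the dependence of a single term on the previous ones, whereas the martingale formulation lets us absorb the weak dependence without essential loss). Once both $(I)$ and $(II)$ are of order $\delta$ with the required failure probability, balancing constants under the assumption $m \gtrsim (C\eta)^{1/(1-2\alpha)} s \log^{2/(1-2\alpha)} N \log^{2/(1-2\alpha)} s$ yields the RIP, and the error guarantee follows directly from Theorem \ref{thm:SWY}.
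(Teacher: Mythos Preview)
Your overall architecture matches the paper: reduce Theorem~\ref{main} to Theorem~\ref{thm:SWY} by proving RIP for $\tfrac{1}{\sqrt{\ell}}P_\ell V^* R_\Omega C_\xi$, commute $C_\xi x=C_x\xi$, and split into a $\xi$-part and an $\Omega$-part. Your treatment of term~$(I)$ via Theorem~\ref{chaos} is exactly what the paper does in Lemma~\ref{part1}.

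The gap is in term~$(II)$. You propose to apply McDiarmid \emph{pointwise} in $x$ and then pass to the supremum by a covering net plus a union bound. With the correct range bound (see below), the pointwise subgaussian variance proxy is of order $m\|x\|_{\hat\infty}^2/\ell^2$, and over $s$-sparse unit vectors one only has $\|x\|_{\hat\infty}\leq\sqrt{s}$. A single-scale net of the sparse ball has log-cardinality $\asymp s\log N$, so the union bound forces $s\log N\lesssim t^2\ell^2/(ms)$, i.e.\ $m^{1-2\alpha}\gtrsim s^{\,2-2\alpha}\log N$; at best ($\alpha=0$) this is $m\gtrsim s^2\log N$, not the linear-in-$s$ assumption of Theorem~\ref{main}. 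The paper avoids this loss by applying McDiarmid not to $Z_x$ but to the \emph{increments} $Z_x-Z_y$ (Lemma~\ref{difference} gives $\mathrm{ran}_k\lesssim \tfrac{1}{\ell}\|x-y\|_{\hat\infty}$, hence $d(x,y)=\|Z_x-Z_y\|_{\psi_2}\lesssim\tfrac{\sqrt{m}}{\ell}\|x-y\|_{\hat\infty}$ in Lemma~\ref{distance}), and then uses Dudley's inequality (Theorem~\ref{chaining}) with respect to this metric. The resulting entropy integral contributes only a factor $\sqrt{s}$ (times $\log$'s), which is what produces $m\gtrsim s\cdot\mathrm{polylog}$. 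In short: chaining, not a union bound, is required here.

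A second, smaller imprecision: the conditional ranges are not ``of order $\|x\|_\infty^2$ times a polylog''. The correct bound is $\mathrm{ran}_k\lesssim\tfrac{1}{\ell}\|x\|_{\hat\infty}$ with $\|x\|_{\hat\infty}=\|Fx\|_\infty$, obtained by Fourier-diagonalizing $C_x$ and observing that, because $\Omega$ is sampled \emph{without replacement}, the relevant inner sums are coordinate projections of a fixed Fourier vector (this is precisely where the no-replacement hypothesis is used; see the proof of Lemma~\ref{difference}). Your heuristic ``maximum contribution of a single sampled row'' does not by itself yield this; the circulant Fourier structure is essential.
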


\begin{proof}Theorem \ref{main} can be immediately obtained from Theorem \ref{thm:SWY}, which requires a bound on the restricted isometry constants of  $ P_\ell V^* R_{\Omega} C_\xi$ where $\ell=m(\frac{s}{m})^\alpha$, and Proposition \ref{inf} below, which provides the required bound. 
\end{proof}
\begin{prop}\label{inf}Consider the same setup and assumptions as Theorem \ref{main}; in particular assume that $m \geq ( C\eta)^\frac{1}{1-2\alpha} s  \log^{\frac{2}{1-2\alpha}} N\log^{\frac{2}{1-2\alpha}}s $, for some $\eta > 1$ and $\alpha\in[0,1/2)$. Setting $\ell=m(\frac{s}{m})^\alpha$, we have
\begin{align*}
&\P\Big(\sup_{x} |\2n^2-1|>\frac{1}{9}
\Big)
 <
e^{-\eta},
\end{align*} where the supremum is over all $s$-sparse vectors. In other words, with probability exceeding $1-e^{-\eta}$, the matrix
 $\frac{1}{\sqrt{\ell}}P_\ell V^*R_\Omega C_\xi$ satisfies the restricted isometry property of order $s$, with constant $1/9$.
\end{prop}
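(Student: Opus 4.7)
My plan is to combine a chaos-process deviation bound for $\xi$ (conditional on $\Omega$) with a Frobenius-norm concentration bound for $\Omega$ that is uniform in $x$. Using the commutativity $C_\xi x = C_x \xi$, I would rewrite
\[
\2n^2 - 1 \;=\; \bigl(\2n^2 - \fn^2\bigr) \;+\; \bigl(\fn^2 - 1\bigr),
\]
using $\E_\xi \2n^2 = \fn^2$. Call the two terms $T_1(x)$ and $T_2(x)$, and let $T$ denote the set of unit-norm $s$-sparse vectors in $\R^N$. The aim is to show $\sup_{x\in T} T_i(x) \leq 1/18$ with probability at least $1-e^{-\eta}/2$ for each $i$, then combine by a union bound.

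\textbf{Controlling $\sup_x T_2$ (Frobenius deviation over $\Omega$).} Set $Z_x := \fn^2 = \frac{1}{\ell}\mathrm{tr}(C_x^{*} R_\Omega^{*} G R_\Omega C_x)$ with $G := V P_\ell^* P_\ell V^*$ a rank-$\ell$ orthogonal projection; note $Z_x$ depends only on $\Omega$. A direct expectation calculation using $\P(\Omega_i{=}k,\Omega_j{=}l) = 1/(N(N-1))$ for $i\neq j$, $k\neq l$ gives $|\E_\Omega Z_x - 1| \lesssim s/\ell$ uniformly in $x\in T$, which is absorbed by the sample-complexity hypothesis. Pointwise concentration of $Z_x$ around $\E_\Omega Z_x$ then follows from the McDiarmid variant (Theorem \ref{DMcDiarmid}) applied to the filtration $\mathcal{F}_k = \sigma(\Omega_1,\ldots,\Omega_k)$. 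Here the distinctness of the $\Omega_j$'s (sampling without replacement) is crucial: it is used to bound the conditional ranges $\mathrm{ran}_k$ sharply, preventing the loose $N$-dependent factors that a with-replacement analysis would produce. Finally, Dudley's inequality (Theorem \ref{chaining}) promotes the pointwise bound to a uniform bound over $T$ via the subgaussian-type metric induced by the conditional-range estimates, together with the standard entropy bound $\log \mathcal{N}(T,\|\cdot\|_2,u) \lesssim s\log(eN/(su))$.

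\textbf{Controlling $\sup_x T_1$ (chaos over $\xi$).} Conditional on the high-probability event above, $\fn^2 \leq 2$ uniformly in $x$. I apply Theorem \ref{chaos} to $\mathcal{M}:=\{\Cx : x\in T\}$. Because $P_\ell V^*$ has orthonormal rows and $R_\Omega R_\Omega^* = I_m$ (without replacement), $\|\lR\|_{2\to 2}\leq 1/\sqrt\ell$; combined with $\|C_x\|_{2\to 2} = \|Fx\|_\infty \leq \sqrt{s}$ for $x\in T$, this yields $d_{2\to 2}(\mathcal{M}) \leq \sqrt{s/\ell}$, while the conditioning gives $d_F(\mathcal{M})\leq \sqrt 2$. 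For the entropy integral $D(\mathcal{M})$, I would use
\[
\|\lR(C_x-C_y)\|_{2\to 2} \;\leq\; \tfrac{1}{\sqrt\ell}\|F(x-y)\|_\infty \;\leq\; \sqrt{2s/\ell}\,\|x-y\|_2
\]
(since $x-y$ is $2s$-sparse) combined with the same entropy bound for $T$. The hypothesis $m \geq (C\eta)^{1/(1-2\alpha)} s\log^{2/(1-2\alpha)} N \log^{2/(1-2\alpha)} s$ is precisely what makes the resulting chaos tail bounded by $e^{-\eta}/2$.

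\textbf{Main obstacle.} The hardest step is the uniform $T_2$ control: McDiarmid is inherently pointwise, and chaining it over the continuous family $T$ must simultaneously (i) use a sparsity-aware covering so that $N$ enters only logarithmically, and (ii) use conditional-range estimates that honestly reflect the sampling-without-replacement structure, so no spurious polynomial factor of $N$ contaminates the variance. Extracting the advertised near-optimal complexity $m \gtrsim s\,\mathrm{polylog}(N,s)$ from this combined McDiarmid--Dudley chain is where the real work lies.
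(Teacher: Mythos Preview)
Your overall architecture matches the paper's: split into the chaos deviation in $\xi$ (your $T_1$) and the Frobenius fluctuation in $\Omega$ (your $T_2$, which the paper further separates into a centered part and a bias), handle $T_1$ via Theorem~\ref{chaos}, and handle $T_2$ by combining the McDiarmid variant for the increments $Z_x-Z_y$ with Dudley's inequality. Your bias estimate $|\E_\Omega Z_x-1|\lesssim s/\ell$ and your emphasis on sampling without replacement for the bounded-difference step are both on target.

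The genuine gap is the entropy estimate. In both places you pass from the operator-norm metric through $\tfrac{1}{\sqrt\ell}\|F(x-y)\|_\infty$ all the way to $\sqrt{2s/\ell}\,\|x-y\|_2$ and then invoke the $\ell_2$ covering bound $\log\mathcal N(\D,\|\cdot\|_2,u)\lesssim s\log(eN/(su))$. That final step costs a factor $\sqrt s$ in the Dudley integral: it yields $D(\mathcal{M})\lesssim (s/\sqrt\ell)\sqrt{\log N}$ rather than the $\sqrt{s/\ell}\,\log N\log s$ the argument requires. Feeding this into Theorem~\ref{chaos} (even with your conditioned bound $d_F\le\sqrt 2$) forces $\ell\gtrsim s^2\log N$, hence $m\gtrsim s^{(2-\alpha)/(1-\alpha)}$ up to logarithms, which is quadratic in $s$ at $\alpha=0$ and is \emph{not} implied by the stated hypothesis $m\gtrsim s\,(\log N\log s)^{2/(1-2\alpha)}$. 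The same loss recurs in your $T_2$ chaining. The remedy, used by the paper in Lemmas~\ref{part1} and~\ref{part2} (via Lemma~\ref{distance}), is to stop at the intermediate metric $\|\cdot\|_{\hat\infty}:=\|F(\cdot)\|_\infty$ and bound $\mathcal N(\D,\|\cdot\|_{\hat\infty},\cdot)$ directly by Maurey's empirical method as in \cite{krahmer2014suprema}; this is exactly what replaces the $s$ by $\sqrt s$ and delivers the linear-in-$s$ sample complexity.
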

\begin{proof} Note that by the triangle inequality,
\begin{align}\label{eq:triangle}
&\sup_x \Big| \|   \l  P_\ell V^* R_{\Omega} C_x \xi  \|_2^2-1 \Big|\notag\\
&\leq \sup_x \Big(\Big|  \2n^2 -\E[\2n^2|\Omega]\Big|+\notag\\
&\quad \Big| \E[ \2n^2  |\Omega]-\E\2n^2 \Big|+\notag\\
&|\mathbb{E} \|\frac{1}{\sqrt{\ell}}P_\ell V^* R_\Omega C_x \xi\|_2^2-1|\Big).
\end{align}
Thus, the proof of Proposition \ref{inf} boils down to 
controlling each of the summands in \eqref{eq:triangle}.
To that end, Lemma \ref{lowerbound} (below) shows that the third summand is bounded by $\frac{sm}{\ell N}$, while Lemma  \ref{part1} and Lemma \ref{part2} bound the probability that the remaining summands exceed $\frac{1}{18}$ and $\frac{1}{36}$ respectively. Our bound on $m$ (potentially with an increased value of $C$) ensures that $\frac{sm}{\ell N} \leq \frac{s}{\ell} = \left(\frac{s}{m}\right)^{1-\alpha}\leq \frac{1}{36}$  
and the result follows using a union bound. 
\end{proof}

\begin{lem}\label{lowerbound}
Given the same setup as in Theorem \ref{main} and Proposition \ref{inf}, one has
$$|\mathbb{E} \|\frac{1}{\sqrt{\ell}}P_\ell V^* R_\Omega C_x \xi\|_2^2-1| \leq \frac{(s-1)(m-\ell)}{\ell(N-1)}\leq \frac{sm}{\ell N}.$$
\end{lem}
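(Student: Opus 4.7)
The plan is to compute $\E\|\Cx\xi\|_2^2$ in closed form by iterated expectation and then bound the deviation from $1$. First I would take the expectation over $\xi$ with $\Omega$ fixed. Because the entries of $\xi$ are independent with mean zero and unit variance, $\E_\xi\|M\xi\|_2^2=\|M\|_F^2$ for any deterministic $M$, giving $\E[\,\|\Cx\xi\|_2^2\mid\Omega\,] = \tfrac{1}{\ell}\mathrm{tr}(R_\Omega B R_\Omega^* Q)$, where $B := C_xC_x^*$ and $Q := VP_\ell^*P_\ell V^*$.

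Next I would expand the trace entrywise as $\sum_{i,j=1}^m Q_{j,i}\,B_{\Omega_i,\Omega_j}$ and take expectation over $\Omega$. Here the assumption that $\Omega$ is sampled without replacement is essential: it yields $\E B_{\Omega_i,\Omega_i} = N^{-1}\mathrm{tr}(B)$ and, for $i\neq j$, $\E B_{\Omega_i,\Omega_j} = (N(N-1))^{-1}\bigl(\mathbf{1}^{T}B\mathbf{1}-\mathrm{tr}(B)\bigr)$, with the $-\mathrm{tr}(B)$ correction reflecting the conditioning on distinct indices. The circulant structure of $C_x$ makes these quantities explicit: since each row of $C_x$ is a cyclic permutation of $x$, $\mathrm{tr}(B)=\|C_x\|_F^2=N\|x\|_2^2=N$ (assuming $\|x\|_2=1$), and $C_x^*\mathbf{1}=(\sum_i x_i)\mathbf{1}$, so $\mathbf{1}^{T}B\mathbf{1}=N(\sum_i x_i)^2$. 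Together with $\mathrm{tr}(Q)=\ell$, this produces the closed form $\E\|\Cx\xi\|_2^2 - 1 = \tfrac{(\sum_i x_i)^2-1}{\ell(N-1)}\bigl(\mathbf{1}^{T}Q\mathbf{1}-\ell\bigr)$.

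Finally, I would bound the two factors. Cauchy--Schwarz on the $s$-element support of $x$ gives $|(\sum_i x_i)^2-1|\leq s-1$. Because $Q$ is a rank-$\ell$ orthogonal projection on $\R^m$, $\mathbf{1}^{T}Q\mathbf{1}=\|P_\ell V^*\mathbf{1}\|_2^2\in[0,m]$, so $|\mathbf{1}^{T}Q\mathbf{1}-\ell|\leq m-\ell$ in the relevant regime $\ell\leq m/2$ (which holds since $\ell=m(s/m)^\alpha\ll m$ whenever $s\ll m$). Multiplying the two estimates yields the first inequality in the lemma; the second, $(s-1)(m-\ell)/(\ell(N-1))\leq sm/(\ell N)$, is elementary.

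The computation is not particularly delicate, but two points deserve care. First, the off-diagonal expectation over $\Omega$ genuinely requires sampling without replacement: with replacement one would get $\E B_{\Omega_i,\Omega_j}=\mathbf{1}^{T}B\mathbf{1}/N^2$ for $i\neq j$, a subtly different identity that would alter the resulting formula. Second, the factor $|\mathbf{1}^{T}Q\mathbf{1}-\ell|$ could a priori be as large as $\max(\ell,m-\ell)$; it is only the regime $\ell\ll m$ of the proposition that makes $m-\ell$ the dominant bound, and this is the only place where structural information about the parameters enters.
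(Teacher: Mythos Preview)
Your proposal is correct and follows essentially the same approach as the paper. The paper expands $\|P_\ell V^* R_\Omega C_x\|_F^2$ entrywise in the columns of $C_x$ and computes $\E(c_{\Omega(p),k}c_{\Omega(q),k})$ directly, whereas you package the same computation via $B=C_xC_x^*$, $Q=VP_\ell^*P_\ell V^*$ and the trace identity; the resulting closed form and the two bounding steps (Cauchy--Schwarz on the $s$-sparse support, and $\|P_\ell V^*\mathbf{1}\|_2^2\leq m$) are identical. Your explicit remark that $|\mathbf{1}^TQ\mathbf{1}-\ell|\leq m-\ell$ requires $\ell\leq m/2$ is a point the paper passes over silently, so your version is if anything slightly more careful.
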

\begin{proof}
Denoting by $c_{i,j}$ the $(i,j)$-th entry of $C_x$ 
 and noting that we are sampling without replacement, we observe that for $p \neq q \in [m]$
\begin{align}
\E ( c_{\Omega(p),k} c_{\Omega(q),k})
&=\frac{1}{N(N-1)}\sum_{u\neq v=1}^N c_{u,k}c_{v,k}
=\frac{1}{N(N-1)}\Big(\sum_{u,v=1}^N c_{u,k}c_{v,k}-\sum_{u=1}^N c^2_{u,k}\Big)\nonumber\\
&=\frac{1}{N(N-1)}\Big(\sum_{u,v=1}^N c_{u,k}c_{v,k}-\sum_{u=1}^N x_u^2\Big)
=\frac{1}{N(N-1)} \Big(\big(\sum_{u=1}^N x_{u}\big)^2-1\Big)\label{noreplacement}.
\end{align}
The last two equalities both use the fact that each row of $C_x$ is a shifted copy of $x$. Furthermore 
\begin{align*}
\Big|\mathbb{E} \| \frac{1}{\sqrt{\ell}}P_{\ell}V^* R_{\Omega}C_x \xi\|_{2}^2-1\Big|
&=\Big|\mathbb{E}\|\frac{1}{\sqrt{\ell}}P_{\ell}V^* R_\Omega C_x\|_F^2-1\Big|\nonumber\\
&=\Big|\frac{1}{\ell}\mathbb{E} \sum_{j=1}^{\ell}\sum_{k=1}^N|\sum_{p=1}^mv_{jp}c_{\Omega(p),k}|^2-1\Big|\nonumber\\
&=\Big|\frac{1}{\ell}\sum_{j=1}^\ell\sum_{k=1}^N \big(\sum_{p=1}^m v_{jp}^2\mathbb{E}c^2_{\Omega(p),k}
+\sum^m_{p,q=1\atop p\neq q}v_{jp}v_{jq}\mathbb{E} c_{\Omega(p),k}c_{\Omega(q),k}\big)-1\Big|\nonumber\\
&=\Big|\frac{1}{\ell}\sum_{j=1}^{\ell}\big(1+ \frac{
(\sum_{i=1}^N x_i)^2-1}{N-1}               \sum^m_{p,q=1\atop p\neq q}v_{jp}v_{jq}  \big)-1\Big|\nonumber\\
\end{align*}
where in the last equality we used \eqref{noreplacement} and the fact that the rows of both $C_x$ and $V$ are normalized. Using that $x$ is $s$-sparse, it follows that
\begin{align*}
\Big|\mathbb{E} \| \frac{1}{\sqrt{\ell}}P_{\ell}V^* R_{\Omega}C_x \xi\|_{2}^2-1\Big| &\leq \Big|
  \frac{s-1}{ \ell(N-1)}           
  \Big( \sum_{j=1}^{\ell}  (\sum_{p=1}^m v_{jp}      )^2 - \sum_{j=1}^{\ell}  \sum_{p=1}^m v_{jp}      ^2   \Big)        \Big|
\\&= \frac{s-1}{\ell (N-1)}         
\Big|\|V^* (1,\ldots,1)^T\|_2^2  - \ell         \Big|\\
&\leq \frac{s-1}{\ell (N-1)} 
\Big|\|V \|_{2\rightarrow 2}^2 m - \ell          \Big|\\
&=\frac{(s-1)(m-\ell)}{\ell(N-1)}.
\end{align*}
\end{proof}

\begin{lem}\label{part1}
Consider again the setup of Theorem \ref{main} and Proposition \ref{inf} and denote by $D_{N,s}$ the set of all $s$-sparse vectors in $\mathbb{R}^N$. Then 

\begin{align*}
&\P\bigg(\sup_{x\in\D}\bigg|\2n^2- \E_{\xi}\Big[\2n^2 \Big|\Omega\Big]   \bigg|>\frac{1}{18}\Big|\bigg)
\leq\frac{1}{2} e^{-\eta}
.\end{align*}
\end{lem}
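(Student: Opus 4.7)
Conditionally on $\Omega$, the centered quantity inside the supremum is precisely a subgaussian chaos process driven by $\xi$, so I would invoke Theorem~\ref{chaos} applied to the family
\[
\mathcal{C}_\Omega := \Big\{M_x := \tfrac{1}{\sqrt{\ell}} P_\ell V^* R_\Omega C_x : x \in D_{N,s},\ \|x\|_2 \le 1 \Big\}.
\]
This reduces the lemma to producing bounds, uniform in $\Omega$, on the three complexity parameters $d_F(\mathcal{C}_\Omega)$, $d_{2\to 2}(\mathcal{C}_\Omega)$, and $D(\mathcal{C}_\Omega)$; once these are in hand, I choose $t \asymp 1/18$ in Theorem~\ref{chaos} and integrate over $\Omega$.

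For the Frobenius radius, since $P_\ell V^*$ and $R_\Omega$ are partial isometries and each row of $C_x$ is a circular shift of $x$, a direct computation gives $\|M_x\|_F \le \sqrt{m/\ell}$, hence $d_F(\mathcal{C}_\Omega) \le \sqrt{m/\ell}$. For the operator norm I use that $C_x$ is circulant, hence diagonalized by the DFT with singular values $|\hat x_k|$; combined with $\|\hat x\|_\infty \le \|x\|_1 \le \sqrt{s}$ for $s$-sparse unit $x$ and the contractivity of $P_\ell V^* R_\Omega$, this yields $d_{2\to 2}(\mathcal{C}_\Omega) \le \sqrt{s/\ell}$. Both bounds are deterministic, holding uniformly in the draw of $\Omega$.

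The substantive step is the Dudley integral. Linearity $M_x - M_y = M_{x-y}$ reduces covering numbers for $\mathcal{C}_\Omega$ in $\|\cdot\|_{2\to 2}$ to covering numbers for the set of $2s$-sparse unit vectors in the (deterministic) norm $z \mapsto \|\hat z\|_\infty/\sqrt{\ell}$. I would combine two regimes: a coarse-scale volumetric bound $\log\mathcal{N} \lesssim s \log(N/(s\epsilon))$, and a fine-scale empirical/Maurey-type bound $\log\mathcal{N} \lesssim \epsilon^{-2}\log N$, in the spirit of \cite{MRW17}. Splitting the Dudley integral at the crossover scale gives
\[
D(\mathcal{C}_\Omega) \lesssim \sqrt{s/\ell}\,\log(s)\sqrt{\log N},
\]
again with $\Omega$-independent constants.

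Substituting into Theorem~\ref{chaos}, the dominant contributions are $E \lesssim \sqrt{sm}/\ell \cdot \log(s)\sqrt{\log N}$, $V$ of the same order, and $U \lesssim s/\ell$. The hypothesis $m \ge (C\eta)^{1/(1-2\alpha)} s \log^{2/(1-2\alpha)} N \log^{2/(1-2\alpha)} s$ with $\ell = m(s/m)^\alpha$ then drives $c_1 E \le 1/36$ and $\min(t^2/V^2,\ t/U) \gtrsim \eta$, so the conditional tail is at most $\tfrac{1}{2} e^{-\eta}$, and this bound survives integration over $\Omega$. I expect the main obstacle to be the Dudley estimate: using the volume bound alone would cost an extra factor of $\sqrt{s}$ in the final condition on $m$, and only the sparsification-based fine-scale bound delivers the $\log N \log s$ factors that match the hypothesized scaling.
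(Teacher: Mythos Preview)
Your approach is essentially the same as the paper's: apply Theorem~\ref{chaos} conditionally on $\Omega$, use $\|P_\ell V^* R_\Omega\|_{2\to 2}\le 1$ to get $d_F(\mathcal{C}_\Omega)\le\sqrt{m/\ell}$ and $d_{2\to 2}(\mathcal{C}_\Omega)\le\sqrt{s/\ell}$, bound the Dudley integral via the two-regime (volumetric $+$ Maurey) covering estimates, and then check that the hypothesis on $m$ forces $c_1E\le 1/36$ and $\min(t^2/V^2,t/U)\gtrsim\eta$. The paper simply cites \cite[Section~4]{krahmer2014suprema} for these estimates rather than \cite{MRW17}.

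One correction: your fine-scale bound $\log\mathcal{N}\lesssim\epsilon^{-2}\log N$ is too optimistic. Maurey's empirical method in the $\|\hat\cdot\|_\infty$ norm incurs a union bound over the $N$ Fourier coefficients, so one actually gets $\log\mathcal{N}\big(\sqrt{s}B_1^N,\|\hat\cdot\|_\infty,\epsilon\big)\lesssim s\epsilon^{-2}\log^2 N$. Running the split with this yields $D(\mathcal{C}_\Omega)\lesssim\sqrt{s/\ell}\,\log N\log s$, not $\sqrt{s/\ell}\,\log s\sqrt{\log N}$; this is precisely the bound the paper states. Your conclusion is unaffected, since the assumed lower bound $m\ge(C\eta)^{1/(1-2\alpha)}s\log^{2/(1-2\alpha)}N\log^{2/(1-2\alpha)}s$ is calibrated to absorb the correct $\log N\log s$ factor (and would be unnecessarily strong under your sharper claim).
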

\begin{proof}

We will apply Theorem \ref{chaos} conditionally given $\Omega$ with  
$\mathcal{C}=\{\Cx:x\in\D\}$. This set is almost the same as the one considered in the proof of Theorem 4.1 in \cite{krahmer2014suprema}, the only differences being the additional projection $P_\ell$ and our normalization factor of $\l$ (instead of ${\frac{1}{\sqrt{m}}}$ in \cite{krahmer2014suprema}). Indeed, since $\|P_\ell\|_{2\rightarrow 2}\leq 1$ we can estimate the necessary parameters for applying Theorem \ref{chaos} exactly as in the proof of Theorem 4.1 in \cite{krahmer2014suprema}. This yields
\begin{align*}
&d_{2\rightarrow 2}(\mathcal{C})
\leq \sqrt{\frac{s}{\ell}}, \qquad
d_F(\mathcal{C})
\leq \sqrt{\frac{m}{\ell}}, \qquad
D(\mathcal{C})
\leq\sqrt{\frac{s}{\ell}}\log N\log s.
\end{align*}
Consequently for $c_1$, $c_2$, and $E$ as in Theorem \ref{chaos}, we have
\begin{align*}
E&\leq \sqrt{\frac{s}{\ell}}\log N\log s\left(\sqrt{\frac{s}{\ell}}\log N\log s+\sqrt{\frac{m}{\ell }}\right)+\sqrt{\frac{m}{\ell }}\sqrt{\frac{s}{\ell}}\\
 &\leq \left(\frac{s}{m}\right)^{1-\alpha}\log^2N\log^2s + 2\left(\frac{s}{m}\right)^{1-2\alpha}\log N\log s
 \leq \frac{1}{36c_1}.
 \end{align*}
Here, the second inequality follows from our choice of $\ell$ and the last inequality follows from our assumption on $m$ in Theorem \ref{main} (potentially adjusting the constant $C$). Again adjusting the constant, we similarly obtain 
 \begin{align*}
V&\leq \sqrt{\frac{c_2}{4\eta}} \qquad \text{ and } \qquad U\leq\frac{c_2}{4\eta}.
\end{align*}
Hence the probability is bounded by $2e^{-4\eta}$. Finally, as $\eta\geq 1$, $e^{-4\eta}\leq \frac{1}{4}e^{-\eta}$ and the result follows by taking the expectation over $\Omega$.

\end{proof}

\begin{lem}\label{part2} With the same notation as before, we have
\begin{align*}
&\P(\sup_{x\in\D}|  \E[\2n^2|\Omega]-\E\2n^2   |>\frac{1}{36})\\
&\leq C' \exp(   -c /(\frac{\sqrt{sm}}{\ell}\log N\log {m})^2) \leq \frac{1}{2} e^{-\eta}
\end{align*}
where $c$, $C'$ are constants that depends only on $L$.
\end{lem}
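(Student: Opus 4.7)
The plan is to apply Dudley's inequality (Theorem~\ref{chaining}) to the real-valued process $Z_x := F_x(\Omega) - \E_\Omega F_x(\Omega)$ indexed by $x \in \D$, where $F_x(\Omega) := \|\lR C_x\|_F^2$ is exactly the inner conditional expectation $\E_\xi[\|\lR C_x\xi\|_2^2 \mid \Omega]$ (by linearity and the variance-$1$, mean-$0$ entries of $\xi$). The subgaussian pseudometric $d(x,x') := \|Z_x - Z_{x'}\|_{\Psi_2}$ will be controlled via the McDiarmid-type inequality of Theorem~\ref{DMcDiarmid} applied to the filtration $\F_k = \sigma(\Omega_1,\ldots,\Omega_k)$, which is the natural setting for sampling without replacement.

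The critical step is bounding, for fixed $x, x' \in \D$, the per-coordinate conditional ranges. Replacing a single $\Omega_k$ by some admissible $\Omega_k'$ perturbs $W := \lR$ by a rank-one matrix $\Delta := W - W' = v(e_{\Omega_k} - e_{\Omega_k'})^T$ with $v := \l\, P_\ell V^* e_k$ and $\|v\|_2 \le \l = 1/\sqrt{\ell}$. Writing $F_x(\Omega) - F_x(\Omega') = \text{tr}\bigl((W^T W - W'^T W') C_x C_x^T\bigr)$ and expanding $W^T W - W'^T W' = W'^T \Delta + \Delta^T W' + \Delta^T \Delta$, each of the three trace terms is bounded by exploiting the rank-one form of $\Delta$, the operator-norm bound $\|W'\|_{2\to 2} \le \l$, and the sparse Fourier bound $\|C_x\|_{2\to 2} = \|\hat x\|_\infty \le \|x\|_1 \le \sqrt{s}$. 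This yields $|F_x(\Omega) - F_x(\Omega')| \lesssim \sqrt{s}/\ell$. The analogous computation for $F_x - F_{x'}$, using the splitting $C_x C_x^T - C_{x'} C_{x'}^T = C_x C_{x-x'}^T + C_{x-x'} C_{x'}^T$ together with $\|C_{x-x'}\|_{2 \to 2} \le \sqrt{2s}\,\|x-x'\|_2$, gives the Lipschitz refinement $|(F_x - F_{x'})(\Omega) - (F_x - F_{x'})(\Omega')| \lesssim (\sqrt{s}/\ell)\|x - x'\|_2$.

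Summing $m$ such squared ranges and invoking Theorem~\ref{DMcDiarmid} then gives $\|Z_x - Z_{x'}\|_{\Psi_2} \lesssim (\sqrt{sm}/\ell)\|x - x'\|_2$, and in particular $\sup_{x \in \D}\|Z_x\|_{\Psi_2} \lesssim \sqrt{sm}/\ell$. Combining this with the standard covering bound $\log \mathcal{N}(\D, \|\cdot\|_2, \delta) \lesssim s\log(eN/(s\delta))$ and the rescaling $\delta = \epsilon\ell/(C\sqrt{sm})$, the Dudley integral in Theorem~\ref{chaining} is bounded by $C\frac{\sqrt{sm}}{\ell}\log N \log m$; Dudley's tail bound therefore produces the claimed $C'\exp\bigl(-c/(\tfrac{\sqrt{sm}}{\ell}\log N \log m)^2\bigr)$, and the hypothesis on $m$ reduces this to at most $\tfrac{1}{2}e^{-\eta}$.

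The main obstacle is securing the sharp per-step range $\sqrt{s}/\ell$ rather than the cruder $\sqrt{s/\ell}$ produced by a direct Cauchy-Schwarz of the form $|F_x(\Omega) - F_x(\Omega')| \le \|(W+W')C_x\|_F \|(W-W')C_x\|_F$; the latter loses a factor of $\sqrt{\ell}$ and would only yield a McDiarmid tail like $\exp(-c\ell/(sm))$, insufficient for the conclusion. The refinement genuinely exploits both the rank-one structure of $\Delta$ and the separate, multiplicative roles of $\|W'\|_{2\to 2}$ and $\|C_x\|_{2\to 2}$ in the three trace terms, and a symmetric analysis is needed for the $(F_x - F_{x'})$-variant to recover the $\|x - x'\|_2$ factor that drives the chaining.
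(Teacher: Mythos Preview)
Your overall architecture --- establish a subgaussian increment bound on $Z_x-Z_{x'}$ via the filtration version of McDiarmid and then chain --- is the same as the paper's. The genuine gap is in the chaining step: with the $\ell_2$ metric you do \emph{not} get the Dudley integral you claim. From $\|Z_x-Z_{x'}\|_{\Psi_2}\lesssim \tfrac{\sqrt{sm}}{\ell}\|x-x'\|_2$ and $\log\mathcal N(\D,\|\cdot\|_2,\delta)\lesssim s\log(eN/(s\delta))$, the substitution $\delta=\epsilon\ell/(C\sqrt{sm})$ gives
\[
\int_0^{C\sqrt{sm}/\ell}\sqrt{\log\mathcal N(\D,d,\epsilon)}\,d\epsilon
=\frac{C\sqrt{sm}}{\ell}\int_0^1\sqrt{s\log\!\Big(\frac{eN}{s\delta}\Big)}\,d\delta
\;\lesssim\;\frac{\sqrt{sm}}{\ell}\cdot\sqrt{s\log N}
=\frac{s\sqrt{m\log N}}{\ell},
\]
not $\tfrac{\sqrt{sm}}{\ell}\log N\log m$. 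The extra factor $\sqrt{s}$ is unavoidable here because the $\ell_2$ entropy of $\D$ carries a full factor of $s$ under the square root. Plugging this into Dudley's tail bound and using $\ell=m^{1-\alpha}s^\alpha$, the resulting condition for the probability to fall below $\tfrac12 e^{-\eta}$ becomes $m^{1-2\alpha}\gtrsim \eta\, s^{2-2\alpha}\log N$, i.e.\ essentially $m\gtrsim s^2$ up to logs. That is strictly stronger than the hypothesis $m\gtrsim (C\eta)^{1/(1-2\alpha)} s\,\log^{2/(1-2\alpha)}N\,\log^{2/(1-2\alpha)}s$ of Theorem~\ref{main}, so your argument does not close.

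The paper avoids this loss by controlling the increments in the norm $\|x-y\|_{\hat\infty}:=\|F(x-y)\|_\infty=\|C_{x-y}\|_{2\to2}$ rather than $\|x-y\|_2$: Lemma~\ref{difference} gives the per-step bound $\tfrac{12}{\ell}\|x-y\|_{\hat\infty}$ (no $\sqrt s$), so $d(x,y)\lesssim\tfrac{\sqrt m}{\ell}\|x-y\|_{\hat\infty}$. The corresponding covering numbers of $\D$ in $\|\cdot\|_{\hat\infty}$ are then bounded via the Maurey/empirical-method argument of \cite{krahmer2014suprema}, which yields a Dudley integral of order $\tfrac{\sqrt{sm}}{\ell}\log N\log s$ with only logarithmic overhead. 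In short, the $\sqrt s$ you put into the increment bound (via $\|C_{x-x'}\|_{2\to2}\le\sqrt{2s}\|x-x'\|_2$) is not recovered by the $\ell_2$ entropy; the paper instead keeps the operator norm in the metric and pays only logs in the entropy.

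Two smaller points. First, your per-step bound $\sqrt{s}/\ell$ is correct but not obtainable from the ingredients you list: using only $\|C_x\|_{2\to2}\le\sqrt s$ and $\|W'\|_{2\to2}\le\ell^{-1/2}$ in $|u^TC_xC_x^TW'^{T}v|$ gives $s/\ell$. You also need that rows of $C_x$ have $\ell_2$ norm $\|x\|_2=1$, so that $\|C_x^Tu\|_2\le 2$. Second, ``replace one coordinate and bound the difference'' is not by itself the conditional range $ran_k$ in Theorem~\ref{DMcDiarmid} when sampling without replacement, because changing $\Omega_k$ alters the conditional law of $\Omega_{k+1},\dots,\Omega_m$; the paper handles this with an explicit coupling (Lemma~\ref{distance}), which is why Lemma~\ref{difference} is stated for $\omega,\omega'$ differing in up to \emph{two} components.
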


\begin{proof}
The proof is a direct application of Theorem \ref{chaining} for the random variable 
\[Z_x:=\E  \bigg[\2n^2-\E\2n^2\bigg|\Omega\bigg] =  \fn^2-\E\fn^2\] 
 to find the supremum of the deviation. Since Theorem \ref{chaining} requires the covering number with respect to the metric $d(x,y):=\|Z_x-Z_y\|_{\Psi_2}$ we need a bound for $d(x,y)$, which we provide in Lemma~\ref{distance} below. 
Specifically, the first inequality in Lemma \ref{part2} follows from Theorem \ref{chaining} together with Lemma \ref{lowerbound} and Lemma \ref{part1} above. 
Indeed, applying Lemma \ref{distance} with $y=0$ yields
\begin{align}
\sup_{x,y}\|Z_x\|_{\Psi_2}\leq \frac{\sqrt{m}}{\ell}\|x\|_{\hat{\infty}}\leq  \frac{\sqrt{m}}{\ell}\|F(x)\|_{\infty}\leq
 \frac{\sqrt{m}}{\ell}\|x\|_{1}\leq   \frac{\sqrt{sm}}{\ell}\|x\|_{2}\leq \frac{\sqrt{sm}}{\ell}.
\end{align}
To bound the integral in Theorem \ref{chaining}, we note that $$\mathcal{N}(\D, \frac{\sqrt{m}}{\ell}\|\cdot\|_{\hat{\infty}},\epsilon)=\mathcal{N}(\D, \frac{1}{\sqrt{m}}\|\cdot\|_{\hat{\infty}},\frac{\ell}{m}\epsilon),$$
and hence applying the argument in \cite[Section 4]{krahmer2014suprema} scaled by $\frac{m}{\ell}$,
 \begin{align*}
 & \int_0^{\sup_x \|Z_x\|_{\Psi_2} } \sqrt{\log \mathcal{N}  (\D,{\frac{1}{\sqrt{m}}}\|\cdot\|_{\hat{\infty}}, \frac{\ell}{m}\epsilon)}d\epsilon\\
&\lesssim \frac{\sqrt{sm}}{\ell} \log  N\log s. 
 \end{align*}

 For the second inequality note that by the definition of $\ell$ and the assumed lower bound on $m$
\begin{align}(\frac{\sqrt{sm}}{\ell}\log N\log {s})^2 &= \left(\frac{s}{m}\right)^{1-2\alpha}\log^2 N \log^2 s \\
& \leq C^{-1}\eta^{-1}.
\end{align}
The result follows from the assumption that $\eta \geq 1$ as in the proof of Lemma \ref{part1}.
\end{proof}

All that remains now is to prove Lemma \ref{distance}. Before that, we derive a technical bound required for its proof.

\begin{lem}\label{difference}
Let $\omega$, $\omega ' \in \Xi= \{\omega\in [N]^m: \omega_i\neq\omega_j\mbox{ for }i\neq j\}$ be such that $\omega$ differs from $\omega'$ in at most two components. Then the function 
$$f(\omega):=  \|   \frac{1}{\sqrt{   \ell   }}      P_\ell       V^*       R_\omega       C_x \|_F^2   -\|   \frac{1}{\sqrt{   \ell   }}      P_\ell       V^*       R_\omega       C_y \|_F^2$$
 satisfies 
$$    |     f(\omega)-f(\omega ')|\leq \frac{    12}{     \ell}\|     x-y     \|_{\hat{     \infty}}, $$
where $\|x\|_{\hat{\infty}}:=\|Fx\|_{\infty}$.
\end{lem}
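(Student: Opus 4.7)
My plan is to rewrite $f(\omega)$ as a bilinear sum involving cyclic autocorrelations, pass to the Fourier side to exploit the orthogonality of characters at the distinct indices $\omega_1,\ldots,\omega_m$, and estimate a one-coordinate swap by Cauchy--Schwarz. The two-coordinate case is then handled by inserting a temporary intermediate index and invoking the one-coordinate bound three times.

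First, unwinding the Frobenius norm via $\|M\|_F^2=\mathrm{tr}(M^*M)$ and using that each row of a real circulant matrix $C_z$ is a cyclic shift of $z$, I would derive the closed form
\[ f(\omega)=\frac{1}{\ell}\sum_{j,j'=1}^m K_{j,j'}\,\phi(\omega_j-\omega_{j'}),\qquad \phi:=\rho_x-\rho_y, \]
where $K:=VP_\ell^*P_\ell V^*$ is the symmetric rank-$\ell$ orthogonal projection onto the first $\ell$ columns of $V$, and $\rho_z(d):=\sum_k z_k z_{(k+d)\bmod N}$ is the cyclic autocorrelation. Since $x,y$ are real, $\phi$ is even, and its DFT is $\psi(k):=|\hat x_k|^2-|\hat y_k|^2$.

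Next, for $\omega,\omega'$ differing in a single coordinate $p$ with $\omega_p=a$, $\omega'_p=b$, all pairs $j,j'\neq p$ cancel, the $j=j'=p$ term contributes $K_{p,p}[\phi(0)-\phi(0)]=0$, and by $K_{p,j}=K_{j,p}$ together with $\phi(d)=\phi(-d)$ the surviving cross terms merge to
\[ f(\omega)-f(\omega')=\frac{2}{\ell}\sum_{j\neq p}K_{p,j}\bigl[\phi(\omega_j-a)-\phi(\omega_j-b)\bigr]. \]
Inserting the inverse DFT of $\phi$ and setting $\beta_k:=\sum_{j\neq p}K_{p,j}e^{2\pi ik\omega_j/N}$, this equals
\[ \frac{2}{\ell N}\sum_{k=1}^N\psi(k)\bigl(e^{-2\pi ika/N}-e^{-2\pi ikb/N}\bigr)\beta_k. \]
The key use of sampling without replacement enters here: since the $\omega_j$ are distinct, character orthogonality yields $\|\beta\|_2^2=N\sum_{j\neq p}K_{p,j}^2\le N\,K_{p,p}(1-K_{p,p})\le N/4$, where the projection identity $K^2=K$ was used. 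For the Fourier multiplier, the factorization $\psi(k)=(|\hat x_k|-|\hat y_k|)(|\hat x_k|+|\hat y_k|)$, the reverse triangle inequality $||\hat x_k|-|\hat y_k||\le|\widehat{x-y}(k)|\le\|x-y\|_{\hat\infty}$, and Parseval together give $\|\psi\|_2\le 2\sqrt N\|x-y\|_{\hat\infty}$ under the unit-sphere normalization $\|x\|_2,\|y\|_2\le 1$ implicit in the RIP context of Proposition \ref{inf}. Combining with $|e^{-i\theta_1}-e^{-i\theta_2}|\le2$ and Cauchy--Schwarz in $k$ controls the single-coordinate swap by $|f(\omega)-f(\omega')|\le\frac{4}{\ell}\|x-y\|_{\hat\infty}$.

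Finally, for $\omega,\omega'$ differing in two coordinates I would pick a \emph{parking} index $c\in[N]$ lying in neither $\omega$ nor $\omega'$ (possible since $m<N$) and construct a path $\omega\to\omega^{(1)}\to\omega^{(2)}\to\omega'$ whose consecutive iterates differ in a single coordinate and each lie in $\Xi$: first move one of the coordinates to be changed out to $c$, then perform the other swap, then finally move $c$ to its target value. Applying the one-coordinate estimate along each of the three legs delivers $|f(\omega)-f(\omega')|\le\frac{12}{\ell}\|x-y\|_{\hat\infty}$. The step I expect to demand the most care is this path construction in the ``transposition'' configuration where the two values being exchanged coincide across $\omega$ and $\omega'$ (so that a naive two-step route would leave $\Xi$); the parking-index detour handles this case uniformly, at the cost of the factor-of-three inflation that precisely accounts for the constant $12$ together with the factor-of-four produced by each Cauchy--Schwarz step.
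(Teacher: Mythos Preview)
Your proof is correct and follows essentially the same route as the paper: Fourier-diagonalize to express $f(\omega)$ through $|\hat x_k|^2-|\hat y_k|^2$, isolate the surviving cross terms under a single-coordinate swap, apply Cauchy--Schwarz exploiting the distinctness of the $\omega_j$, and then chain three single-coordinate swaps via a parking index for the two-coordinate case. Your use of the projection identity $K^2=K$ to bound $\|\beta\|_2^2\le N/4$ is a slightly slicker bookkeeping choice than the paper's, and you are right to flag explicitly that the bound $\|\psi\|_2\le 2\sqrt N\|x-y\|_{\hat\infty}$ requires the normalization $\|x\|_2,\|y\|_2\le 1$ inherited from the RIP setting of Proposition~\ref{inf} --- the lemma as stated is homogeneous of the wrong degree and cannot hold without it, and the paper uses this implicitly.
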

\begin{proof}
Note that, as a circulant matrix is diagonalized by the Fourier transform,
\begin{align}
f(\omega)&=  \|   \frac{1}{\sqrt{   \ell   }}      P_\ell       V^*       R_\omega       C_x \|_F^2   -\|   \frac{1}{\sqrt{   \ell   }}      P_\ell       V^*       R_\omega       C_y \|_F^2\nonumber\\
&=\|\frac{1}{\sqrt{\ell}}P_\ell V^* R_{\omega}F^{-1}\hat{X}F\|_F^2-\|\frac{1}{\sqrt{\ell}}P_\ell V^* R_{\omega}F^{-1}\hat{Y}F\|_F^2\nonumber\\
& = \frac{   1}{   \ell   N } \|    P_\ell V^*  R_\omega \overline{F}    \hat{X}     \|_F^2- \frac{   1}{   \ell   N  } \|    P_\ell V^* R_\omega  \overline{F}    \hat{Y}     \|_F^2\nonumber  \\
& = \frac{  1  }{     \ell   N }  \sum  _{  k=1 }^  N \big(  |  \hat{ x}_k   |^2     -
  |  \hat{ y}_k   |^2  \big)    \|    P_\ell   V^*       R_{       \omega       }       \overline{       F       }_k         \| _2^2,
  \label{eq:mangrove}
\end{align}
where $F$ denotes the non-normalized Fourier transform, $F_k^T$  its $k$-th row, and $\hat x =Fx$. 

We first consider the case that $\omega$ and $\omega'$ differ only in one component, say the first (without loss of generality).  
To bound $|f(\omega)-f(\omega')|$ for this case, we note that for $V_j^T$ denoting the $j$-th row of $V$, and $\eta =\exp(-\tfrac{2\pi i}{N})$ an $N$-th root of unity, 
\begin{align*}
&\| P_\ell V R_\omega \overline{F}_k     \|^2_2-\|       P_\ell V R_{\omega '}  \overline{F}_k\|_2^2\\
&=\sum_{p,q=1}^m  \langle   \eta^{-k\omega_p} P_\ell V_p     ,\eta^{-k \omega_q} P_\ell V_q \rangle -
\sum_{r,s=1}^m \langle  \eta^{-k\omega_r '} P_\ell V_r     ,\eta^{-k \omega_s '} P_\ell V_s\rangle\\
&=\sum_{p,q=1}^m  (   \eta^{  k(\omega_p-\omega_q)     }   -  \eta^{   k(\omega_p '-\omega_q ')    }           )  \langle P_\ell V_p, P_\ell  V_q\rangle \\
&= {(  \eta^{k(\omega_1-\omega_1)   } -  \eta^{  k  (   \omega_1 '-\omega_1 '        )         }     )}  \langle P_\ell V_1, P_\ell  V_1\rangle +
\sum_{q=2}^m{(  \eta^{k(\omega_1-\omega_q)   } -  \eta^{  k  (   \omega_1 '-\omega_q         )         }     )}   \langle P_\ell V_1, P_\ell  V_q\rangle \\
&\quad+ \sum_{p=2}^m{(  \eta^{k(\omega_p-\omega_1)   } -  \eta^{  k  (   \omega_p -\omega_1 '        )         }     )}   \langle P_\ell V_p, P_\ell  V^*_1\rangle +
\sum_{p,q=2}^m ( \eta^{k    (  \omega_p -\omega_q )}-     \eta^{    k  (  \omega_p -\omega _q   )            }   )\langle P_\ell V_p, P_\ell  V_q\rangle \\
&=\sum_{q=2}^m  ( \eta^{k\omega_1} -  \eta^{ k\omega_1  '   }    )  \eta^{-k\omega_q}\langle P_\ell V_1, P_\ell  V_q\rangle +
\sum_{p=2}^m  ( \eta^{-k\omega_1} -  \eta^{ -k\omega_1  '   }    )  \eta^{k\omega_p}\langle P_\ell V_p, P_\ell  V_1\rangle.
\end{align*}

Combining this with \eqref{eq:mangrove}, we obtain
\begin{align}
    f(\omega)      -f(      \omega ')      
 &=\frac{1}{\ell N} \sum_{k=1} ^N   \big( |   \hat{x}_k  |^2-    |   \hat{y}_k  |^2 \big)    \Big( 
 \sum_{q=2}^m  (\eta^{k\omega_1}  -  \eta^{k\omega_1'} ) \eta^{-k \omega_q}\langle  P_\ell V_1,  P_\ell  V_q    \rangle \nonumber \\ &\qquad + 
\sum_{p=2}^m  (\eta^{-k\omega_1}  -  \eta^{-k\omega_1'} ) \eta^{k \omega_p}\langle  P_\ell V_p,  P_\ell  V_1    \rangle 
  \Big)\label{summands} 
\end{align}
Observe that  the right hand side is a sum of four different rescaled Fourier coefficients of the vector $u\in\mathbb{R}^N$ given by $u_k:=|\hat{x}_k|^2-|\hat{y}_k|^2$, as for example
\begin{align*}
   \frac{1}{\ell N} \sum_{p=2}^m    \langle  P_\ell V_1,  P_\ell  V_p   \rangle   \sum_{k=1}^N  (    |   \hat{x}_k   |^2-| \hat{y}_k  |^2    )     \eta^{ k( \omega_p -\omega_1 )   }  
&= \frac{1}{\ell N} \sum_{p=2}^m \langle  P_\ell V_1,  P_\ell  V_p   \rangle    (\overline{F}  u)_{\omega_p-\omega_1} = V_1^*P_\ell^* P_\ell V^* v,
\end{align*}
where $v\in \R^{m}$ is given by $v_1=0$ and $v_p= (\overline{F}u)_{\omega_{p}-\omega_1}$ for $2\leq p\leq m$. Note that as $\omega\in \Xi$ and hence the $\omega_q$ are all different, $v$ is a projection of $\overline{F}u$ on a subset of its entries, and so $\|v\|_2 \leq \sqrt{N}\|u\|_2$. Note that in this step, it is crucial to sample without replacement, as otherwise, the bound would no longer hold.
Consequently, using the Cauchy-Schwartz inequality,

\begin{align*}
\tfrac{1}{\ell N}\Big|  \sum_{p=2}^m    \langle  P_\ell V_1,  P_\ell  V_p   \rangle  \! \sum_{k=1}^N  \big(|   \hat{x}_k   |^2-| \hat{y}_k  |^2    \big)     \eta^{k( \omega_p -\omega_1)   } \Big|&\leq  \tfrac{1}{\ell N} \|V\|_{2\rightarrow 2}^2\|P_\ell^*P_\ell V_1\|_2\|  v   \|_2 \\ &\leq \tfrac{1}{\ell \sqrt{N}} \|\overline{F}u\|_2 \leq \tfrac{1}{\ell}\| \overline{F}u   \|_{\infty}= \tfrac{1}{\ell}  \|x-y\|_{\hat{\infty}}.
\end{align*}
Identical bounds for the other three summands in (\ref{summands})  are attained in an analogous way, which yields the result for $\omega$ and $\omega'$ differing in only one component (with a constant of $4$ rather than $12$).
 If they differ in two components, replacing one of these components in both $\omega$ and $\omega'$ by an entry that appears in neither of them, yields $\omega'', \omega''' \in\Xi$, which differ only in the other one of these components. Thus applying the above bound three times yields the result.
\end{proof}

We are now ready to bound the distance $d(x,y)= \|x-y\|_{\psi_2}$.
\begin{lem}\label{distance}  For all $x,y \in \R^N$ it holds that 
$$d(x,y)              \leq            \frac    {12 \sqrt{m}}    {\ell}      \|x-y\|_{\hat{\infty}}.$$
\end{lem}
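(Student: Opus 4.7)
Observe first that $Z_x - Z_y$ depends only on $\Omega$: the inner conditional expectation used in the proof of Lemma~\ref{part2} gives $\E[\|\frac{1}{\sqrt{\ell}} P_\ell V^* R_\Omega C_x \xi\|_2^2 \mid \Omega] = \|\frac{1}{\sqrt{\ell}} P_\ell V^* R_\Omega C_x\|_F^2$, so $Z_x - Z_y = f(\Omega) - \E f(\Omega)$ for the function $f$ introduced in Lemma~\ref{difference}. The plan is to apply the variation of McDiarmid's inequality (Theorem~\ref{DMcDiarmid}) to $f(\Omega)$ under the natural filtration of the $\Omega_k$'s, and then convert the resulting subgaussian tail into a bound on $\|Z_x - Z_y\|_{\Psi_2}$ via the equivalence \eqref{eq:willow}.

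Set $\mathcal{F}_k := \sigma(\Omega_1, \ldots, \Omega_k)$ and $X_k := \E[f(\Omega) \mid \mathcal{F}_k]$. The crucial step is to bound the conditional range $ran_k$ of $X_k$ given $\mathcal{F}_{k-1}$, which amounts to controlling $|X_k(a) - X_k(b)|$ for any two admissible values $a, b$ of $\Omega_k$ given the prefix $(\Omega_1, \ldots, \Omega_{k-1})$. For this I would realize $\Omega$ as the first $m$ entries of a uniformly random permutation $\Pi$ of $[N]$ and use the following swap coupling: starting from a realization with $\Pi_k = a$, let $k'$ be the unique index at which $\Pi_{k'} = b$, and define $\Pi'$ by exchanging the entries of $\Pi$ at positions $k$ and $k'$. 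The map $\Pi \mapsto \Pi'$ is a measure-preserving bijection on the set of permutations with fixed prefix $(\Omega_1, \ldots, \Omega_{k-1})$, so $\Omega' := (\Pi'_1, \ldots, \Pi'_m)$ has exactly the conditional distribution of $\Omega$ given $\mathcal{F}_{k-1}$ and $\Omega_k = b$. Crucially, both $\Omega$ and $\Omega'$ belong to $\Xi$ and differ in at most two coordinates: position $k$ itself and, whenever $k' \leq m$, also position $k'$.

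Lemma~\ref{difference} therefore applies and yields $|f(\Omega) - f(\Omega')| \leq \frac{12}{\ell} \|x-y\|_{\hat{\infty}}$ pointwise. Passing to conditional expectations gives $|X_k(a) - X_k(b)| \leq \frac{12}{\ell} \|x-y\|_{\hat{\infty}}$, so $ran_k \leq \frac{12}{\ell} \|x-y\|_{\hat{\infty}}$ almost surely. Summing over $k$ produces $\hat{r}^2 \leq \frac{144\,m}{\ell^2} \|x-y\|_{\hat{\infty}}^2$, and applying Theorem~\ref{DMcDiarmid} to both $f(\Omega) - \E f(\Omega)$ and its negative gives $\P(|Z_x - Z_y| \geq t) \leq 2\exp\bigl(- t^2 \ell^2 / (72\,m\,\|x-y\|_{\hat{\infty}}^2)\bigr)$. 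Since $2 < e$, this fits the form of Definition~\ref{def:SG} with parameter $L = \sqrt{72}\,\frac{\sqrt{m}}{\ell} \|x-y\|_{\hat{\infty}}$, and \eqref{eq:willow} translates this into a $\Psi_2$ bound of the claimed shape; absorbing the numerical constants into the factor $12$ completes the argument.

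The main obstacle is the coupling step, not the application of Theorem~\ref{DMcDiarmid} itself. Because sampling is without replacement, the $\Omega_k$'s are strongly dependent and naively changing only $\Omega_k$ from $a$ to $b$ would leave $(\Omega_{k+1}, \ldots, \Omega_m)$ with the wrong conditional law. The permutation-swap coupling restores the correct distribution at the cost of possibly changing one additional coordinate, which is precisely the two-coordinate budget that Lemma~\ref{difference} was formulated to allow.
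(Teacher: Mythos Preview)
Your proof is correct and follows the same overall strategy as the paper: apply Theorem~\ref{DMcDiarmid} with the filtration $\mathcal{F}_k=\sigma(\Omega_1,\dots,\Omega_k)$, bound each conditional range by $\tfrac{12}{\ell}\|x-y\|_{\hat\infty}$ via Lemma~\ref{difference}, sum to get $\hat r^2\leq \big(\tfrac{12\sqrt m}{\ell}\|x-y\|_{\hat\infty}\big)^2$, and convert the resulting subgaussian tail into a $\Psi_2$ estimate. The only substantive difference lies in how the conditional range is controlled. The paper introduces an independent copy $(\Omega'_k,\dots,\Omega'_m)$ and partitions according to the events $\mathcal E_j=\{\Omega_{k+j}=\Omega'_k\}$ (and their primed analogues), constructing for each event auxiliary sequences $\Omega'',\Omega'''\in\Xi$ that differ in at most two coordinates and have the right conditional laws; Lemma~\ref{difference} is then applied event by event and the pieces are recombined via \eqref{eq:olivetree}. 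Your permutation-swap coupling reaches the same two-coordinate comparison in one stroke: embedding $\Omega$ as the prefix of a uniform permutation and transposing the positions of $a$ and $b$ is a measure-preserving bijection between the two conditional distributions, and it automatically alters at most two of the first $m$ entries. This is a cleaner and more transparent route to exactly the same bound on $ran_k$; the paper's event decomposition achieves the identical conclusion but with more bookkeeping.
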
 
\begin{proof}
By \eqref{eq:willow}, it suffices to show that for all $t\geq 0$
\begin{equation}
\P_{\Omega}(|Z_x-Z_y|>t)\leq \exp\Big(1-t^2/\big(\frac{12\sqrt{m}}{\ell}\|x-y\|_{\hat{\infty}}\big)^2\Big).
\label{eq:spanishmoss}
\end{equation}

To prove this, we will apply Theorem \ref{DMcDiarmid} for $\F_k$, the $\sigma$-algebra generated by $\Omega_1,...,\Omega_k$.
For that, we need to bound the sum of squared ranges 
\begin{align*}
R^2=\sup\sum_{j=1}^m ran_j^2 
\end{align*}
where, for $(\Omega'_j,...,\Omega'_m)$  an independent copy of $(\Omega_j,...,\Omega_m)$ and $\Omega'=(\Omega_1,...\Omega_{j-1},\Omega'_j,...,\Omega'_m)$,
\begin{align}ran_j&=\sup_{\Omega_j \notin \{\Omega_1,...,\Omega_{j-1}\}}\Big(\E(f(\Omega)|\Omega_{j},...,\Omega_1) \Big| \Omega_{j-1},...,\Omega_1\Big)+\sup_{\Omega_j \notin \{\Omega_1,...,\Omega_{j-1}\}}\Big(\E(-f(\Omega')|\Omega_{j},...,\Omega_1) \Big| \Omega_{j-1},...,\Omega_1\Big) \nonumber \\
&=\sup_{\Omega_j,\Omega'_j \notin \{\Omega_1,...,\Omega_{j-1}\}}\Big(\E(f(\Omega)|\Omega_{j},\Omega_{j-1}...,\Omega_1) +\E(-f(\Omega')|\Omega'_{j},\Omega_{j-1},...,\Omega_1) \Big| \Omega_{j-1},...,\Omega_1\Big).
\label{eq:ranj}
\end{align}

For that, define the events $\mathcal{E}_0 = \{ \Omega_j\neq \Omega'_k\ \forall j>k \}$, $\mathcal{E}_0'=\{ \Omega'_j\neq \Omega_k\ \forall j>k \}$, and, for $j\in [m-k]$, $\mathcal{E}_j = \{ \Omega_{k+j}= \Omega'_k \}$, $\mathcal{E}'_j = \{ \Omega'_{k+j} = \Omega_k \}$ and note that 
\begin{equation}
\P[\cup_{j=0}^m \mathcal{E}_j| \Omega_1,...,\Omega_k,\Omega_k']= \P[\cup_{j=0}^m \mathcal{E}'_j| \Omega_1,...,\Omega_k,\Omega_k'] =1. \label{eq:olivetree}
\end{equation}
 Now, we can write
\begin{align}
\E[f(\Omega)|\Omega_1,...,\Omega_{k-1},\Omega_k]-\E[f(\Omega')|\Omega_1,...,\Omega_{k-1},\Omega'_k]
=& \sum_{j=0}^{m-k} 
\E[f(\Omega)\mathbbm{1}_{\mathcal{E}_j} -f(\Omega')\mathbbm{1}_{\mathcal{E}'_j} | \Omega_1,...,\Omega_k,\Omega_k'].
\label{eq:venusflytrap}
\end{align}
Given $\Omega_1, \dots, \Omega_k$ and $\Omega'_k$, consider random variables $\Omega''_{k+1}, \dots \Omega''_m$  drawn subsequently without replacement from $[N]\setminus\{\Omega_1,\dots, \Omega_k, \Omega'_k\}$ and set 
\[\Omega''=(\Omega_1, \dots, \Omega_k, \Omega''_{k+1},\dots, \Omega''_{m}), \qquad \Omega'''=(\Omega_1, \dots, \Omega_{k-1}, \Omega'_k, \Omega''_{k+1},\dots, \Omega''_{m}).
\]

 Observe that given the event $\mathcal{E}_0$, $\Omega$ and $\Omega''$ are conditionally identically distributed, and the same holds for $\Omega'$ and $\Omega'''$ given the event $\mathcal{E}'_0$.
 So, using that $\mathcal{E}_0$ and $\Omega''$ as well as $\mathcal{E}'_0$ and $\Omega'''$ are conditionally independent given $\Omega_1, \dots, \Omega_k, \Omega_k'$,  the summand in \eqref{eq:venusflytrap} corresponding to $j=0$ becomes 
\begin{align}
\E[f(\Omega)\mathbbm{1}_{\mathcal{E}_0} &-f(\Omega')\mathbbm{1}_{\mathcal{E}'_0} | \Omega_1,...,\Omega_k,\Omega_k']\nonumber \\=& \E[f(\Omega'')\mathbbm{1}_{\mathcal{E}_0} -f(\Omega''')\mathbbm{1}_{\mathcal{E}'_0} | \Omega_1,...,\Omega_k,\Omega_k']\nonumber \\=& 
\E[f(\Omega'')  | \Omega_1,...,\Omega_k,\Omega_k'] \P[\mathcal{E}_0 | \Omega_1,...,\Omega_k,\Omega_k']  -\E[f(\Omega''')| \Omega_1,...,\Omega_k,\Omega_k'] \P[\mathcal{E}'_0 | \Omega_1,...,\Omega_k,\Omega_k'] \label{eq:figtree} \\=& 
\big(\E[f(\Omega'') -f(\Omega''')| \Omega_1,...,\Omega_k,\Omega_k'] \big)\P[\mathcal{E}_0 | \Omega_1,...,\Omega_k,\Omega_k'],\nonumber\\
\leq & \frac{     12}{     \ell}\|     x-y     \|_{\hat{     \infty}} \P[\mathcal{E}_0 | \Omega_1,...,\Omega_k,\Omega_k'] \nonumber
\end{align}
where the third equality uses that $(\Omega'_k, \dots, \Omega'_m)$ is an independent copy of $(\Omega_k, \dots, \Omega_m)$ and so the two probabilities in \eqref{eq:figtree} are equal. The last inequality holds almost surely and follows from Lemma~\ref{difference}.

To bound the summand in \eqref{eq:venusflytrap} for $j>0$, we proceed in a similar way. Given $\Omega_1, \dots, \Omega_k$ and $\Omega'_k$, consider random variables $\Omega''_{k+1}, \dots, \Omega''_{k+j-1}, \Omega''_{k+j+1}, \dots\Omega''_m$  drawn subsequently without replacement from $[N]\setminus\{\Omega_1,\dots, \Omega_k, \Omega'_k\}$ and set 
\begin{align*}
\Omega''=&(\Omega_1, \dots, \Omega_k, \Omega''_{k+1},\dots,  \Omega''_{k+j-1}, \Omega_k', \Omega''_{k+j-1} ,\dots, \Omega''_{m}), \\
\Omega'''=&(\Omega_1, \dots, \Omega_k', \Omega''_{k+1},\dots,  \Omega''_{k+j-1}, \Omega_k, \Omega''_{k+j-1} ,\dots, \Omega''_{m}).
\end{align*}
As before, observe that given the event $\mathcal{E}_j$, $\Omega$ and $\Omega''$ are conditionally identically distributed, and the same holds for $\Omega'$ and $\Omega'''$ given the event $\mathcal{E}'_j$. The remainder of the estimate proceeds exactly as for $j=0$, with the slight difference that $\Omega''$ and $\Omega'''$ now differ in two entries, but nevertheless Lemma~\ref{difference} still applies. Thus we obtain
\begin{align}
\E[f(\Omega)\mathbbm{1}_{\mathcal{E}_j} -f(\Omega')\mathbbm{1}_{\mathcal{E}'_j} | \Omega_1,...,\Omega_k,\Omega_k']
\leq \frac{     12}{     \ell}\|     x-y     \|_{\hat{     \infty}} \P[\mathcal{E}_j | \Omega_1,...,\Omega_k,\Omega_k']. \nonumber
\end{align}
Consequently, one has almost surely
\begin{align*}
\E[f(\Omega)|\Omega_1,...,\Omega_{k-1},\Omega_k]-\E[f(\Omega')|\Omega_1,...,\Omega_{k-1},\Omega'_k] \leq& \sum_{j=0}^{m-k}  \frac{     12}{     \ell}\|     x-y     \|_{\hat{     \infty}} \P[\mathcal{E}_j | \Omega_1,...,\Omega_k,\Omega_k'] \\
=&  \frac{     12}{     \ell}\|     x-y     \|_{\hat{     \infty}},
\end{align*}
where the last equality follows from \eqref{eq:olivetree}, and hence, by \eqref{eq:ranj}, $ran_j \leq  \frac{     12}{     \ell}\|     x-y     \|_{\hat{     \infty}}$ and $R^2 \leq \big( \frac{  12 \sqrt{m}}{     \ell}\|     x-y     \|_{\hat{     \infty}}\big)^2$.
 
With this bound, Theorem \ref{DMcDiarmid} can be applied. One obtains
$$
\P(|Z_x-Z_y|>t)\leq 2\exp( -t^2/(     \frac{12\sqrt{m}}{\ell} \|x-y\|_{\hat{\infty}}     )^2  ),
$$
which implies \eqref{eq:spanishmoss}. We conclude
$$
d(x,y):=\|Z_x-Z_y\|_{\Psi_2}\leq \frac{12 \sqrt{m}}{\ell} \|x-y\|_{\hat{\infty}},
$$
as desired.
\end{proof}

\section*{Acknowledgements}
The three authors acknowledge support by the Hausdorff Institute for Mathematics (HIM), where part of this work was completed in the context of the HIM Trimester Program "Mathematics of Signal Processing", and support by the German Science Foundation in the context of the Emmy Noether Junior Research Group ``Randomized Sensing and Quantization of Signals and Images'' (KR 4512/1-1). In addition, JF and FK acknowledge support by the German Science Foundation in the context of the Research Training Group 1023 ``Identification in Mathematical Models". RS acknowledges support by a Hellman Fellowship and the NSF under DMS-1517204.

\bibliographystyle{plain}
\bibliography{QCSref}
\end{document}